\newtheorem{thm}{Theorem}
\newtheorem{lemma}{Lemma}
\newtheorem{corollary}{Corollary}
\newtheorem{defn}{Definition}
\newcommand{\Expt}{\mbox{${\mathbb E}$} }
\begin{document}
\title{A Shannon-Theoretic Approach to the Storage-Retrieval Tradeoff in PIR Systems}

\author{Chao Tian, Hua Sun, and Jun Chen}
\date{}
\maketitle

%\keyword{Capacity, information theory, multiple descriptions, privacy}

\begin{abstract}
We consider the storage-retrieval rate tradeoff in private information retrieval (PIR) systems using a Shannon-theoretic approach. Our focus is mostly on the canonical two-message two-database case, for which a coding scheme based on random codebook generation and the binning technique is proposed. This coding scheme reveals a hidden connection between PIR and the classic multiple description source coding problem. We first show that when the retrieval rate is kept optimal, the proposed non-linear scheme can achieve better performance over any linear scheme. Moreover, a non-trivial storage-retrieval rate tradeoff can be achieved beyond space-sharing between this extreme point and the other optimal extreme point, achieved by the retrieve-everything strategy. We further show that with a method akin to the expurgation technique, one can extract a zero-error PIR code from the random code. Outer bounds are also studied and compared to establish the superiority of the non-linear codes over linear codes. 
\end{abstract}
\section{Introduction}

Private information retrieval (PIR) addresses the situation of storing $K$ messages of $L$-bits each in $N$ databases, with the requirement that the identity of any requested message must be kept private from any one (or any small subset) of the databases. The early works were largely computer science theoretic \cite{chor1995private}, where $L=1$, and the main question is the scaling law of the retrieval rate in terms of $(K,N)$. 

The storage overhead in PIR systems has been studied in the coding and information theory community, from several perspectives using mainly two problem formulations.  Shah {\em et al.} \cite{Shah_Rashmi_Kannan} considered the problem when $N$ is allowed to vary with $L$ and $K$, and obtained some conclusive results. In a similar vein, for $L=1$, Fazeli {\em et al.}  \cite{Fazeli_Vardy_Yaakobi} proposed a technique to convert any linear PIR code to a new one with low storage overhead by increasing $N$. Other notable results along this line can be found in \cite{Rao_Vardy, blackburn2019pir, blackburn2019pirschemes, zhang2019private, vajha2017binary, asi2018nearly}.

An information theoretic formulation of the PIR problem was considered in \cite{Chan_Ho_Yamamoto}, where $L$ is allowed to increase, while $(N,K)$ are kept fixed. Important properties on the tradeoff between the storage rate and retrieval rate were identified in \cite{Chan_Ho_Yamamoto}, and a linear code construction was proposed. In this formulation, even without any storage overhead constraint, characterizing the minimum retrieval rate in the PIR systems is nontrivial, and this capacity problem was settled in \cite{sun2017PIRcapacity}. Tajeddine {\em et al.} \cite{Tajeddine_Rouayheb} considered the capacity problem when the message is coded across the databases with a maximum-distance separable (MDS) code, which was later solved by Banawan and Ulukus  \cite{banawan2018capacity}. Capacity-achieving code designs with optimal message sizes were given in \cite{tian2018capacity,zhou2020capacity}. Systems where servers can collude were considered in \cite{Sun_Jafar_TPIR}. There have been various extensions and generalizations, and the recent survey article \cite{ulukus2022private} provides a comprehensive overview on efforts following this information theoretic formulation.

In many existing works, the storage component and the PIR component are largely designed separately, usually by placing certain structural constraints on one of them; {\em e.g.,} the MDS coding requirement for the storage component \cite{banawan2018capacity}, or the storage is uncoded \cite{attia2020capacity}; moreover, the code constructions are almost all linear. The few exceptions we are aware of are \cite{sun2018multiround,sun2019breaking,guo2021new}. In this work, we consider the information theoretic formulation of the PIR problem, without placing any additional structural constraints on the two components, and explicitly investigate the storage-retrieval tradeoff \emph{region}. We mostly focus on the case $N=K=2$ here since it provides the most important intuition; we refer to this as the $(2,2)$ PIR system. Our approach naturally allows the joint design of the two components using either linear or \emph{non-linear} schemes.

The work in \cite{sun2018multiround} is of significant relevance to our work, where the storage overhead was considered in both single-round and multi-round PIR systems, when the retrieval rate must be optimal. Although multi-round PIR has the same capacity as single-round PIR, it was shown that  at the minimum retrieval rate,  a multi-round, $\epsilon$-error, non-linear code can indeed break the storage performance barrier of an optimal single-round, zero error, linear code. The question whether all the three differences are essential to overcome this barrier was left as an open question. 

In this work, we show that a non-linear code is able to achieve better performance than the optimal linear code in the single-round zero-error $(2,2)$ PIR system, over a range of the storage rates. This is accomplished by providing a Shannon-theoretic coding scheme based on random codebook generation and the binning technique. The proposed scheme at the minimum retrieval rate is conceptually simpler, and we present it as an explicit example. The general inner bound is then provided, and we show an improved tradeoff can be achieved beyond space-sharing between the minimum retrieval rate code and the other optimal extreme point. By leveraging a method akin to the expurgation technique, we further show that one can extract a zero-error deterministic PIR code from the random $\epsilon$-error PIR code. Outer bounds are also studied for both general codes and linear codes, which allow us to establish conclusively the superiority of non-linear codes over linear codes. Our work essentially answers the open question in \cite{sun2018multiround}, and shows that in fact only non-linearity is essential in breaking the aforementioned barrier. %The proposed random coding scheme has a close but perhaps hidden connection to the classic multiple description source coding problem \cite{gamal1982achievable,venkataramani2003multiple,tian2010new}, and this connection is made more explicit here.

A preliminary version of this work was presented first in part in \cite{tian2018shannon}. In this updated article, we provide a more general random coding scheme, which reveals a hidden connection to the multiple description source coding problem \cite{gamal1982achievable}. Intuitively, we can view the retrieved message as certain partial reconstruction of the full set of messages, instead of a complete reconstruction of a single message. Therefore, the answers from the servers can be viewed as descriptions of the full set of messages, which are either stored directly at the servers or formed at the time of request, and the techniques seen in multiple description coding become natural in the PIR setting. Since the publication of the preliminary version \cite{tian2018shannon}, several subsequent efforts have been made in studying the storage-retrieval tradeoff in the PIR setting, which provided stronger and more general information theoretic outer bounds and several new linear code constructions \cite{sun2019breaking,tian2020storage,guo2021new}. However, the Shannon-theoretic random coding scheme given in \cite{tian2018shannon} remains the best performance for the $(2,2)$ case, which motivate us to provide the general coding scheme in this work and to make the connection to multiple description source coding more explicit. It is our hope that this connection may bring existing coding techniques for the multiple description problem to the study of the PIR problem. 

\section{Preliminaries}

The problem we consider is essentially the same as that in \cite{sun2017PIRcapacity}, with the additional consideration on the storage overhead constraint at the databases. We provide a formal problem definition in the more traditional Shannon-theoretic language, to facilitate subsequent treatment. Some relevant results on this problem are also reviewed briefly in this section. 

\subsection{Problem Definition}

There are two independent messages, denoted as $W_1$ and $W_2$, in this system, each of which is generated uniformly at random in the finite field $\mathbb{F}_{2}^L$, {\em i.e.,} each message is an $L$-bit sequence. There are two databases to store the messages, which are produced by two encoding functions operating on $(W_1,W_2)$
\begin{gather*}
\phi_n: \mathbb{F}_{2}^L\times\mathbb{F}_{2}^L\rightarrow \mathbb{F}_{2}^{\alpha_n},\quad n=1,2,
\end{gather*}
where $\alpha_n$ is the number of storage symbols at database-$n$, $n=1,2$, which is a deterministic function of $L$, i.e., we are using fixed length codes for storage. 
We write $S_1=\phi_1(W_1,W_2)$ and $S_2=\phi_2(W_1,W_2)$. When a user requests message-$k$, it generates two queries $(Q^{[k]}_1,Q^{[k]}_2)$ to be sent to the two databases, randomly in the alphabet $\mathcal{Q}\times\mathcal{Q}$. 
Note the joint distribution satisfies the condition 
\begin{align}
P_{W_1,W_2,Q^{[k]}_1,Q^{[k]}_2}=P_{W_1,W_2} P_{Q^{[k]}_1,Q^{[k]}_2},\quad k=1,2,
\end{align}
{\em i.e., } the messages and the queries are independent. The marginal distributions $P_{W_1,W_2}$ and $P_{Q^{[k]}_1,Q^{[k]}_2}$, $k=1,2$, thus fully specify the randomness in the system. 

After receiving the queries, the databases produce the answers to the query via a set of deterministic functions
\begin{align}
\varphi^{(q)}_{n}: \mathbb{F}_{2}^{\alpha_n}\rightarrow \mathbb{F}_{2}^{\beta^{(q)}_n},\quad q\in \mathcal{Q}, \, n=1,2.
\end{align}
We also write the answers $A_{n}^{[k]}=\varphi^{(Q^{[k]}_n)}_{n}(S_n)$, $n=1,2$. The user, with the retrieved information, wishes to reproduce the desired message through a set of decoding functions
\begin{align}
\psi^{(k,q_1,q_2)}: \mathbb{F}_2^{\beta^{(q_1)}_1}\times\mathbb{F}_2^{\beta^{(q_2)}_2}\rightarrow \mathbb{F}_{2}^L.
\end{align}
The outputs of the functions $\hat{W}_k=\psi^{(k,Q^{[k]}_1,Q^{[k]}_2)}(A_1^{[k]},A_2^{[k]})$ are essentially the retrieved messages.  We require the system to retrieve the message correctly (zero-error), {\em  i.e.}, $\hat{W}_k=W_k$ for $k=1,2$.

%%In contrast, 
%%The average probability of error $P_e$, over all the possible message combinations and the random query strategy, of the system is defined as 
%%\begin{align*}
%%P_e=P_{W_1,W_2,\{Q^{[k]}_n\}}(W_1\neq \hat{W}_1\mbox{ or }W_2\neq \hat{W}_2).
%%\end{align*}
%

%Alternatively, we can require the system to have a diminishing error probability as $L\rightarrow \infty$. 
%The average probability of error $P_e$, assuming the two messages are requested with the same probability, over all the possible message combinations and the random query strategy, is defined as 
%\begin{align*}
%P_e=P_{W_1,W_2,\{Q^{[k]}_n\}}(W_1\neq \hat{W}_1\mbox{ or }W_2\neq \hat{W}_2).
%\end{align*}
%However, a small $P_e$ in the database setting may not provide a sufficient performance guarantee. This is because once the messages are stored, they may be retrieved multiple times, and if the probability of error is large for this particular message pair, then the user may encounter error almost all the time during the subsequent retrieval. Thus instead, we can use the worst case error probability $P_e^*$ as the performance measure, which is defined as
%\begin{align*}
%P^*_e=\max_{w_1,w_2}P_{\{Q^{[k]}_n|W_1=w_1,W_2=w_2,\}}(w_1\neq \hat{W}_1\mbox{ or }w_2\neq \hat{W}_2).
%\end{align*}
%As we shall show later, the distinction between diminishing error and zero error does not cause any essentially difference in our result. 

Alternatively, we can require the system to have a small error probability. Denote the average probability of coding error of a PIR code as
\begin{align}
&P_e=0.5\sum_{k=1,2}P_{W_1,W_2,Q^{[k]}_1,Q^{[k]}_2}(W_k\neq \hat{W}_k).\label{eqn:Pe}
\end{align}
An $(L,\alpha_1,\alpha_2,\beta_1,\beta_2)$ $\epsilon$-error PIR code is defined similar as a (zero-error) PIR code, except that the correctness condition is replaced by the condition that the probability of error $P_e\leq \epsilon$.

Finally, the privacy constraint stipulates that the identical distribution condition must be satisfied:
\begin{align}
P_{Q^{[1]}_n,A^{[1]}_n,S_n}=P_{Q^{[2]}_n,A^{[2]}_n,S_n},\quad n=1,2.
\end{align}
Note that one obvious consequence is that $P_{Q^{[1]}_n}=P_{Q^{[2]}_n}\triangleq P_{Q_n}$, for $n=1,2$. 

We refer to the code, which is specified by two probability distributions $P_{Q^{[k]}_1,Q^{[k]}_2}$, $k=1,2$, and a valid set of coding functions $\{\phi_n,\varphi_n^{(q)},\psi^{k,q_1,q_2}\}$ that satisfy both the correctness and privacy constraints, as an $(L,\alpha_1,\alpha_2,\beta_1,\beta_2)$ PIR code, where $\beta_n=\Expt_{Q_n}[\beta^{(Q_n)}_n]$, for $n=1,2$.

\begin{defn}
A normalized storage-retrieval rate pair $(\bar{\alpha},\bar{\beta})$ is achievable, if for any $\epsilon>0$ and sufficiently large $L$, there exists an $(L,\alpha_1,\alpha_2,\beta_1,\beta_2)$ PIR code, such that
\begin{gather}
L(\bar{\alpha}+\epsilon)\geq \frac{1}{2}(\alpha_1+\alpha_2), \,\, L(\bar{\beta}+\epsilon)\geq \frac{1}{2}\left(\beta_1+\beta_2\right).
\end{gather}
The collection of the achievable normalized storage-retrieval rate pair $(\bar{\alpha},\bar{\beta})$  is the achievable storage-retrieval rate region, denoted as $\mathcal{R}$. 
\end{defn}

 Unless explicitly stated, the rate region $\mathcal{R}$ is used for the zero-error PIR setting. In the definition above, we have used the average rates $(\bar{\alpha},\bar{\beta})$ across the databases instead of the individual rate vectors $\frac{1}{n}(\alpha_1,\alpha_2,\Expt_{Q_1}[\beta^{(Q_1)}_1],\Expt_{Q_2}[\beta^{(Q_2)}_2])$. This can be justified using the following lemma. 

\begin{lemma}
If an $(L,\alpha_1,\alpha_2,\beta_1,\beta_2)$ PIR code exists, then a $(2L, \alpha,\alpha,\beta,\beta)$ PIR code exists, where
\begin{align}
\alpha=\alpha_1+\alpha_2, \quad \beta = \beta_1+\beta_2.
\end{align}
\end{lemma}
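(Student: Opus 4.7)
The plan is to symmetrize a given asymmetric code by running two copies of it in parallel on two independent halves of the new messages, with the roles of the two databases swapped between the two copies. Concretely, I would split each length-$2L$ message $W_k$ into halves $W_k^{(1)}$ and $W_k^{(2)}$, and store
\begin{align*}
S_1^{\text{new}} &= \bigl(\phi_1(W_1^{(1)},W_2^{(1)}),\; \phi_2(W_1^{(2)},W_2^{(2)})\bigr),\\
S_2^{\text{new}} &= \bigl(\phi_2(W_1^{(1)},W_2^{(1)}),\; \phi_1(W_1^{(2)},W_2^{(2)})\bigr).
\end{align*}
Each new database then stores exactly $\alpha_1+\alpha_2=\alpha$ bits, settling the storage rate claim by construction.

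To answer a request for $W_k$, the user runs two independent instances of the original query distribution $P_{Q_1^{[k]},Q_2^{[k]}}$. For the first instance, the query intended for the original database $1$ is sent to whichever new database holds the $\phi_1(W^{(1)})$ block (new database $1$), and the query intended for original database $2$ goes to the new database holding $\phi_2(W^{(1)})$ (new database $2$); the roles are swapped for the second instance, which retrieves $W_k^{(2)}$. Each new database therefore returns one answer from an original database-$1$ role and one from an original database-$2$ role, for a total expected length $\beta_1+\beta_2=\beta$. Correctness is immediate: each half $W_k^{(i)}$ is decoded by applying the original decoder to the pair of answers coming from that instance.

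Privacy is the one step requiring real verification, and is the main obstacle I would focus on. For the new database $1$, its view given request $k$ consists of two blocks that are functionally determined by the two independent instances and the two independent halves $(W^{(1)},W^{(2)})$. The first block has joint distribution $P_{Q_1^{[k]},A_1^{[k]},S_1}$ under the original code, and the second block has joint distribution $P_{Q_2^{[k]},A_2^{[k]},S_2}$ under the original code; because the halves and the query randomness are drawn independently across the two instances, the joint view factors as the product of these two marginals. Applying the original privacy condition $P_{Q_n^{[1]},A_n^{[1]},S_n}=P_{Q_n^{[2]},A_n^{[2]},S_n}$ for $n=1,2$ to each factor shows the product is independent of $k$, so new database $1$ satisfies privacy; the argument for new database $2$ is symmetric. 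Combining all four checks yields a $(2L,\alpha,\alpha,\beta,\beta)$ PIR code.
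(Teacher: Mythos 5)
Your proof is correct and is exactly the space-sharing (symmetrization) argument the paper alludes to: run two independent instances of the original code on independent halves of the messages with the database roles swapped, which doubles $L$ and sums the per-database rates while preserving correctness and privacy by independence of the two instances. The paper itself just defers to \cite{sun2018multiround} for the details, and your write-up correctly supplies them, including the key privacy step that the per-database view factors across the two instances so that the original privacy condition applies factor by factor.
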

This lemma can essentially be proved by a space-sharing argument, the details of which can be found in \cite{sun2018multiround}. The following lemma is also immediate using a conventional space-sharing argument.
\begin{lemma}
The region $\mathcal{R}$ is convex. 
\end{lemma}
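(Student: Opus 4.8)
The plan is to prove convexity of $\mathcal{R}$ by a standard space-sharing (code-concatenation) argument, exactly as the paper anticipates. Fix two achievable pairs $(\bar\alpha^{(1)},\bar\beta^{(1)})$ and $(\bar\alpha^{(2)},\bar\beta^{(2)})$ and a mixing parameter $\lambda\in(0,1)$; the endpoints $\lambda\in\{0,1\}$ are trivial. Given a target accuracy $\epsilon>0$, I would first choose positive integers $L_1,L_2$, both as large as needed, with the ratio $L_1/(L_1+L_2)$ within a small tolerance of $\lambda$. This is possible because the rationals are dense in $[0,1]$ and the phrase ``sufficiently large $L$'' in the definition of achievability leaves us free to scale both blocklengths up. By achievability there then exist an $(L_1,\alpha^{(1)}_1,\alpha^{(1)}_2,\beta^{(1)}_1,\beta^{(1)}_2)$ PIR code and an $(L_2,\alpha^{(2)}_1,\alpha^{(2)}_2,\beta^{(2)}_1,\beta^{(2)}_2)$ PIR code whose normalized storage and retrieval rates are within the tolerance of the respective target pairs.

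Next I would build a blocklength-$(L_1+L_2)$ code that runs the two component codes in parallel on disjoint parts of the messages. Write $W_k=(W_k',W_k'')$ with $W_k'\in\mathbb{F}_2^{L_1}$ and $W_k''\in\mathbb{F}_2^{L_2}$ for $k=1,2$. Database $n$ stores $S_n=(\phi^{(1)}_n(W_1',W_2'),\phi^{(2)}_n(W_1'',W_2''))$, so $\alpha_n=\alpha^{(1)}_n+\alpha^{(2)}_n$. To retrieve message $k$, the user draws the two query pairs independently from the two component query distributions, sends both components to each database, and each database replies with the pair of component answers computed from the corresponding parts of its storage, giving $\beta_n=\beta^{(1)}_n+\beta^{(2)}_n$. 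The user decodes $W_k'$ from the first component's answers and $W_k''$ from the second's, and concatenates. Zero-error correctness is immediate from the correctness of the two component codes, and the messages-independent-of-queries condition is inherited since queries are generated without reference to the messages.

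The one point deserving a short verification is privacy. In the construction above, the triple $(Q^{[k]}_n,A^{[k]}_n,S_n)$ decomposes into a part depending only on $(W_1',W_2')$ and the first query randomness, and a part depending only on $(W_1'',W_2'')$ and the second query randomness. Since $(W_1',W_2')$ is independent of $(W_1'',W_2'')$ and the two sources of query randomness are independent, the joint law of $(Q^{[k]}_n,A^{[k]}_n,S_n)$ is the product of the two component laws; each factor is independent of $k$ by the privacy of the corresponding component code, hence so is the product, which is exactly the required condition $P_{Q^{[1]}_n,A^{[1]}_n,S_n}=P_{Q^{[2]}_n,A^{[2]}_n,S_n}$.

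Finally I would check the rates. The new normalized storage rate is
\[
\frac{\alpha^{(1)}_1+\alpha^{(1)}_2+\alpha^{(2)}_1+\alpha^{(2)}_2}{2(L_1+L_2)}=\frac{L_1}{L_1+L_2}\cdot\frac{\alpha^{(1)}_1+\alpha^{(1)}_2}{2L_1}+\frac{L_2}{L_1+L_2}\cdot\frac{\alpha^{(2)}_1+\alpha^{(2)}_2}{2L_2},
\]
which, by the choice of $L_1,L_2$ and the small tolerances above, is within $\epsilon$ of $\lambda\bar\alpha^{(1)}+(1-\lambda)\bar\alpha^{(2)}$; the identical bound holds for the retrieval rate with the $\beta$'s in place of the $\alpha$'s, using that expected answer lengths simply add. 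Since $\epsilon>0$ was arbitrary, the pair $\lambda(\bar\alpha^{(1)},\bar\beta^{(1)})+(1-\lambda)(\bar\alpha^{(2)},\bar\beta^{(2)})$ lies in $\mathcal{R}$, establishing convexity. I do not expect any genuine obstacle here; the only mild subtleties are arranging $L_1/(L_1+L_2)\approx\lambda$ with both blocklengths large and the bookkeeping in the privacy factorization, both of which are routine.
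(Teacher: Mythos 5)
Your proof is the conventional space-sharing argument that the paper invokes (without spelling out details), obtained by running two component PIR codes in parallel on disjoint message blocks and verifying correctness, privacy, and rate additivity. This matches the paper's intended approach exactly.
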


\subsection{Some Relevant Known Results}

The capacity of a general PIR system with $K$ messages and $N$ databases is identified in \cite{sun2017PIRcapacity} as
\begin{align}
C=\frac{1-1/N}{1-1/N^K},%(1+1/N+1/N^2+...+1/N^{K-1})^{-1},
\end{align}
which in our definition corresponds to the case when $\bar{\beta}$ is minimized, and the proposed linear code achieves $(\bar{\alpha},\bar{\beta})=(K,(1-1/N^K)/(N-1))$. The capacity of MDS-code PIR systems was established in \cite{banawan2018capacity}. In the context of storage-retrieval tradeoff, this result can be viewed as providing the achievable tradeoff pairs
\begin{align}
(\bar{\alpha},\bar{\beta})=\left(t,\frac{1-t^K/N^K}{N-t}\right), t=1,2,\ldots,N.
\end{align}
However when specialized to the $(2,2)$ PIR problem, this does not provide any improvement over the space-sharing strategy between the trivial code of retrieval-everything and the code in \cite{sun2017PIRcapacity}.  By specializing the code in \cite{sun2017PIRcapacity}, it was shown in  \cite{sun2018multiround} that for the $(2,2)$ PIR problem, at the minimal retrieval value $\bar{\beta}=0.75$, the storage rate $\bar{\alpha}_{l}=1.5$ is achievable using a single-round, zero-error linear code, and in fact, it is the optimal storage rate that any single-round, zero-error linear code can achieve.

One of the key observations in \cite{sun2018multiround} is that a special coding structure appears to be the main difficulty in the $(2,2)$  PIR setting, which is illustrated in Fig. \ref{fig:skewedsymmetric}. Here message $W_1$ can be recovered from either $(X_1,Y_1)$ or $(X_2,Y_2)$, and message $W_2$ can be recovered from either $(X_1,Y_2)$ or $(X_2,Y_1)$; $(X_1,X_2)$ is essentially $S_1$ and is stored at database-1, and $(Y_1,Y_2)$ is essentially $S_2$ and is stored at database-2. It is clear that we can use the following strategy to satisfy the privacy constraint: when message $W_1$ is requested, with probability $1/2$, the user queries for either $(X_1,Y_1)$ or $(X_2,Y_2)$; for message 2, with probability $1/2$, the user queries for either $(X_1,Y_2)$ or $(X_2,Y_1)$. More precisely, the following probability distribution $P_{Q^{[1]}_1,Q^{[1]}_2}$ and $P_{Q^{[2]}_1,Q^{[2]}_2}$ can be used:
\begin{align}
P_{Q^{[1]}_1,Q^{[1]}_2}=\left\{
\begin{array}{ll}
0.5& (Q^{[1]}_1,Q^{[1]}_2)=(11)\\
0.5& (Q^{[1]}_1,Q^{[1]}_2)=(22)
\end{array}
\right.,\label{eqn:Q1}
\end{align}
and 
\begin{align}
P_{Q^{[2]}_1,Q^{[2]}_2}=\left\{
\begin{array}{ll}
0.5& (Q^{[2]}_1,Q^{[2]}_2)=(12)\\
0.5& (Q^{[2]}_1,Q^{[2]}_2)=(21)
\end{array}
\right..\label{eqn:Q2}
\end{align}

\begin{figure}[t!]
\centering
\includegraphics[width=5cm]{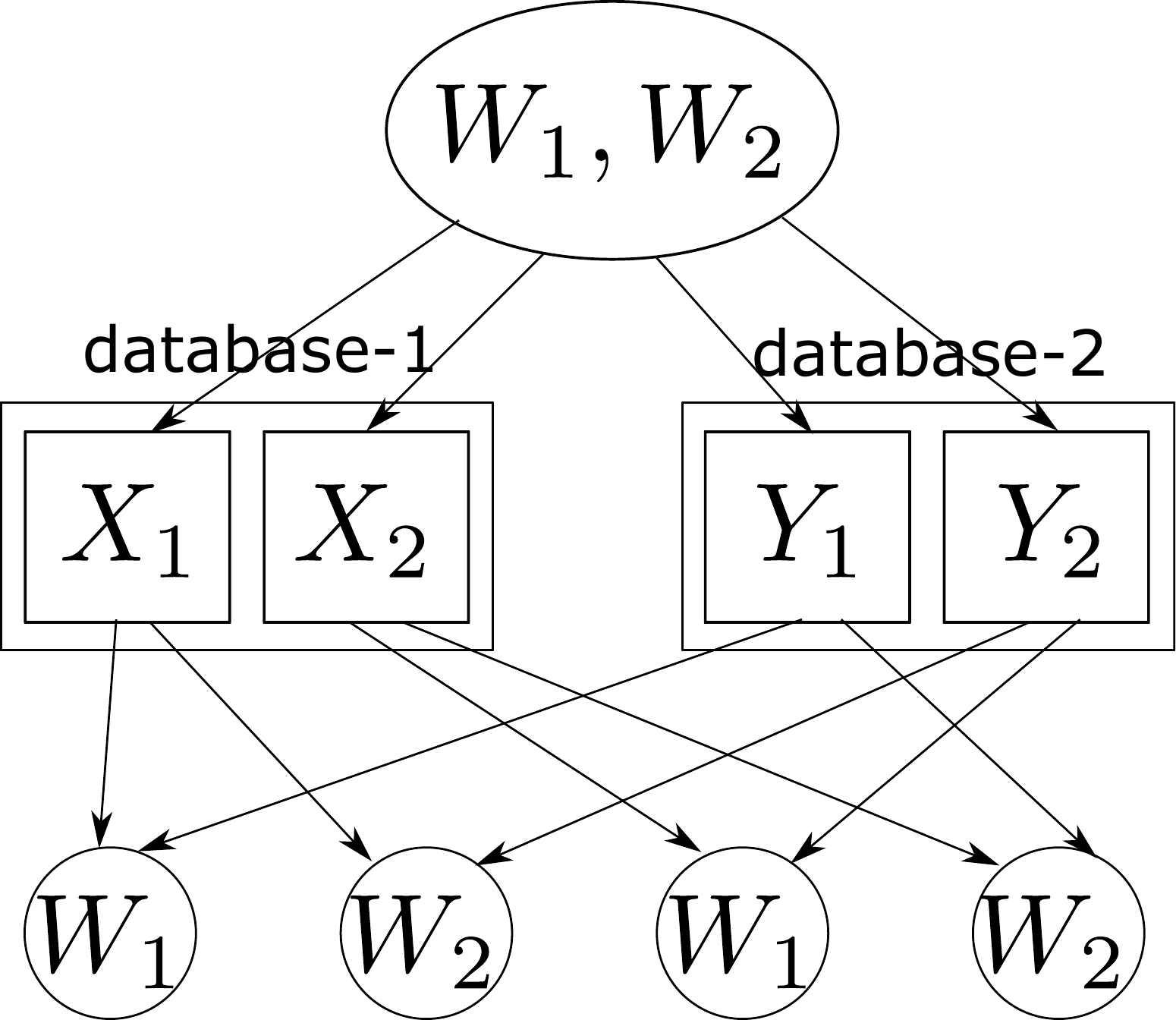}
\caption{A possible coding structure.\label{fig:skewedsymmetric}}
\end{figure}

\subsection{Multiple Description Source Coding}
The multiple description source coding problem \cite{gamal1982achievable} considers compressing a memoryless source $S$ into a total of $M$ descriptions, i.e., $M$ compressed bit sequences, such that the combinations of any subset of these descriptions can be used to reconstruct the source $S$ to guarantee certain quality requirements. The motivation of this problem is mainly to address the case when packets can be dropped randomly on a communication network. 

Denote the coding rate for each description as $R_i$, $i=1,2,\ldots,M$.  A coding scheme was proposed in \cite{venkataramani2003multiple}, which leads to the following rate region. Let $U_1,U_2,\ldots,U_M$ be $M$ random variables jointly distributed with $S$, then the following rates $(R_1,R_2,\ldots,R_M)$ and distortions $(D_{\mathcal{A}},\mathcal{A}\subseteq \{1,2,\ldots,M\})$ are achievable:
\begin{align}
\sum_{i\in \mathcal{A}} R_i&\geq \sum_{i\in \mathcal{A}} H(U_i)-H(\{U_i,i\in\mathcal{A}\}|S),\quad \mathcal{A}\subseteq \{1,2,\ldots,M\},\label{eqn:rates}\\
D_{\mathcal{A}}&\geq \Expt[d(S,f_{\mathcal{A}}(U_i,i\in\mathcal{A}))],\quad \mathcal{A}\subseteq \{1,2,\ldots,M\}.
\end{align}
Here $f_{\mathcal{A}}$ is a reconstruction mapping from the random variables $\{U_i,i\in\mathcal{A}\}$ to the reconstruction domain, $d(\cdot,\cdot)$ is a distortion metric that is used to measure the distortion, and $D_{\mathcal{A}}$ is the distortion achievable using the descriptions in the set $\mathcal{A}$. Roughly speaking, the coding scheme requires generating approximately $2^{nR_i}$ length-$n$ codewords in an i.i.d. manner using the marginal distribution $U_i$ for each $i=1,2,\ldots,M$, and the rate constraints insure that when $n$ is sufficiently large, with overwhelming probability there is a tuple of $M$ codewords $(u_1^n,u_2^n,\ldots,u_M^n)$, one in each codebook constructed earlier, that are jointly typical with the source vector $S^n$. In this coding scheme, the descriptions are simply the codeword indices of these codewords in these codebooks. For a given joint distribution $(S,U_1,U_2,\ldots,U_M)$, we refer to the rate region in (\ref{eqn:rates}) as the MD rate region $\mathcal{R}_{MD}(S,U_1,U_2,\ldots,U_M)$, and the corresponding random code construction the MD codebooks associated with $(S,U_1,U_2,\ldots,U_M)$.

The binning technique \cite{wyner1976rate} can be applied in the multiple description problem to provide further performance improvements, particularly when not all the combinations of the descriptions are required to satisfy certain performance constraints, but only a subset of them are; this technique has previously been used in \cite{pradhan2004n} and \cite{tian2010new} for this purpose. Assume that only the subsets of descriptions $\mathcal{A}_1,\mathcal{A}_2,\ldots,\mathcal{A}_T\subseteq \{1,2,\ldots,M\}$ have distortion requirements associated with the reconstructions using these descriptions, which are denoted as $D_{\mathcal{A}_i}$, $i=1,2,\ldots,T$. Consider the MD codebooks associated with $(S,U_1,U_2,\ldots,U_M)$ at rates $(R'_1,R'_2,\ldots,R'_M)\in\mathcal{R}_{MD}(S,U_1,U_2,\ldots,U_M)$, then assign the codewords in the $i$-th codebook uniformly at random into $2^{nR_i}$ bins with $0\leq R_i\leq R'_i$. The coding rates and distortions that satisfy the following constraints simultaneously for all $\mathcal{A}_i,i=1,2,\ldots,T$ are achievable: 
\begin{align}
\sum_{j\in \mathcal{J}} (R'_j-R_j)&\leq \sum_{j\in \mathcal{J}} H(U_j)-H\left(\left\{U_j,j\in\mathcal{J}\right\}\bigg{|}\left\{U_{j'},j'\in\mathcal{A}_i\setminus \mathcal{J}\right\}\right),\quad \forall\mathcal{J}\subseteq \mathcal{A}_i,\label{eqn:rates2}\\
D_{\mathcal{A}_i}&\geq \Expt[d(S,f_{\mathcal{A}_i}(U_j,j\in\mathcal{A}_i))].
\end{align}
We denote the collection of such rate vectors $(R_1,R_2,\ldots,R_M,R'_1,R'_2,\ldots,R'_M)$ as  $\mathcal{R}^*_{MD}((S,U_1,U_2,\ldots,U_M),(\{U_j,j\in\mathcal{A}_i\},i=1,2,\ldots,T))$, and refer to the corresponding codebooks as the MD$^*$ codebooks associated with the random variables $(S,U_1,U_2,\ldots,U_M)$ and the reconstruction sets $(\mathcal{A}_1,\mathcal{A}_2,\ldots,\mathcal{A}_T)$.

\section{A Special Case: Slepian-Wolf Coding for Minimum Retrieval Rate}
\label{sec:SW}

In this section, we consider the minimum-retrieval-rate case, and show that non-linear and Shannon-theoretic codes are beneficial.  We will be rather cavalier here and ignore some details, in the hope of better conveyance of the intuition. In particular, we ignore the asymptotic-zero probability of error that is usually associated with a random coding argument, but this will be addressed more carefully in Section \ref{sec:main}.

Let us rewrite the $L$-bit messages as 
\begin{align}
W_k=(V_k[1],\ldots,V_k[L])\triangleq V_k^L,\quad k=1,2.
\end{align}
The messages can be viewed as being produced from a discrete memoryless source $P_{V_1,V_2}=P_{V_1}\cdot P_{V_2}$, where $V_1$ and $V_2$ are independent uniform-distributed Bernoulli random variables. 

Consider the following auxiliary random variables %(the same distribution used in \cite{sun2016multiround})
\begin{gather}
X_1\triangleq V_1\land V_2,\quad X_2\triangleq(\neg V_1 )\land (\neg V_2),\nonumber\\
Y_1\triangleq V_1 \land (\neg V_2), \quad Y_2\triangleq (\neg V_1)\land V_2,\label{eqn:distribution}
\end{gather}
where $\neg$ is the binary negation, and $\land$ is the binary ``and'' operation. 
This particular distribution satisfies the coding structure depicted in Fig. \ref{fig:skewedsymmetric}, with $(V_1,V_2)$ taking the role of $(W_1,W_2)$, and the relation is non-linear. The same distribution was used in \cite{sun2018multiround}  to construct a multiround PIR code.  This non-linear mapping appears to allow the resultant code to be more efficient than linear codes. 

We wish to store $(X^L_1,X^L_2)$ at the first database in a lossless manner, however, store only certain necessary information regarding $Y^L_1$ and $Y^L_2$ to facilitate the recovery of $W_1$ or $W_2$. For this purpose, we will encode the message as follows:
\begin{itemize}
\item At database-1, compress  and store $(X^L_1,X^L_2)$ losslessly;
\item At database-2, encode $Y_1^L$ using a Slepian-Wolf code (or more precisely Sgarro's code with uncertainty side information \cite{sgarro1977source}), with either $X_1^L$ or $X^L_2$ at the decoder, whose resulting code index is denoted as $C_{Y_1}$; encode $Y^L_2$ in the same manner, independent of $Y_1^L$, whose code index is denoted as $C_{Y_2}$. 
\end{itemize}
It is clear that for database-1, we need roughly $\bar{\alpha}_1=H(X_1,X_2)$. At database-2, in order to guarantee successful decoding of the Slepian-Wolf code, we can chose roughly
\begin{align}
\bar{\alpha_2}&=\max(H(Y_1|X_1),H(Y_1|X_2))+\max(H(Y_2|X_1),H(Y_2|X_2))\nonumber\\
&=2H(Y_1|X_1),
\end{align}
where the second equality is due to the symmetry in the probability distribution. Thus we find that this code achieves
\begin{align}
\bar{\alpha}_{nl}&=0.5[H(X_1,X_2)+2H(Y_1|X_1)]\nonumber\\
&=0.75+0.75H(1/3,2/3)\nonumber\\
&=0.25+0.75\log_23\approx1.4387. 
\end{align}

The retrieval strategy is immediate from the coding structure in Fig. \ref{fig:skewedsymmetric}, with  $(V_1^L,V_2^L,X^L_1,X^L_2,C_{Y_1},C_{Y_2})$ serving the roles of $(W_1,W_2,X_1,X_2,Y_1,Y_2)$, and thus indeed the privacy constraint is satisfied. The retrieval rates are roughly as follows
\begin{gather}
\bar{\beta}_1^{(1)}=\bar{\beta}_1^{(2)}=H(X_1)=H(X_2),\\
\bar{\beta}_2^{(1)}=\bar{\beta}_2^{(2)}=H(Y_1|X_1),
\end{gather}
implying
\begin{gather*}
\bar{\beta}=0.5[H(X_1)+H(Y_1|X_1)]=0.5H(Y_1,X_1)=0.75. 
\end{gather*}

Thus at the optimal retrieval rate $\bar{\beta}=0.75$, we have
\begin{align}
\bar{\alpha}_{l}=1.5 \mbox { vs. }\bar{\alpha}_{nl}\approx 1.4387,
\end{align}
and clearly the proposed non-linear Shannon-theoretic code is able to perform better than the optimal linear code. We note that it was shown in \cite{sun2018multiround} by using a multround approach, the storage rate $\bar{\alpha}$ can be further reduced, however this issue is beyond the scope of this work.  In the rest of the paper, we build on the intuition in this special case to generalize and strengthen the coding scheme. 

\section{Main Result}
\label{sec:main}

\subsection{A General Inner Bound}
We first present a general inner bound to the storage-retrieval tradeoff region. Let $(V_1,V_2)$ be independent random variables uniformly distributed on $\mathbb{F}_2^{t}\times \mathbb{F}_2^{t}$. Define the region $\mathcal{R}^{(t)}_{in}$ to be the collection of $(\bar{\alpha},\bar{\beta})$ pairs for which there exist random variables $(X_0,X_1,X_2,Y_1,Y_2)$ jointly distributed with $(V_1,V_2)$ such that:
\begin{enumerate}
%\item $(X_1,X_2,Y_1,Y_2)$ are deterministic functions of $(V_1,V_2)$;
\item There exist deterministic functions $f_{1,1}$, $f_{1,2}$, $f_{2,1}$, and $f_{2,2}$ such that
\begin{gather}
V_1=f_{1,1}(X_0,X_1,Y_1)=f_{2,2}(X_0,X_2,Y_2),\quad %\label{eqn:generaldecoding1}\\
V_2=f_{1,2}(X_0,X_1,Y_2)=f_{2,1}(X_0,X_2,Y_1)\label{eqn:generaldecoding4};
\end{gather}
%\item There exist non-negative MD coding rates $$(\gamma_1^{(0)},\gamma_1^{(1)},\gamma_1^{(2)},\gamma_2^{(1)},\gamma_2^{(2)})\in \mathcal{R}_{MD}((V_1,V_2),X_0,X_1,X_2,Y_1,Y_2);$$
%\item There exist non-negative retrieval rates $(\beta_1^{(0)},\beta_1^{(1)},\beta_1^{(2)},\beta_2^{(1)},\beta_2^{(2)})$ such that
%\begin{gather}
%\beta_1^{(0)}\leq \gamma_1^{(0)},\beta_1^{(1)}\leq \gamma_1^{(1)},\beta_1^{(2)}\leq \gamma_1^{(2)},\beta_2^{(1)}\leq \gamma_2^{(1)},\beta_2^{(2)}\leq \gamma_2^{(2)},\\
%\gamma_1^{(0)}-\beta_1^{(0)}+\gamma_1^{(1)}-\beta_1^{(1)}+\gamma_2^{(1)}-\beta_2^{(1)}\leq H(X_0)+H(X_1)+H(Y_1)-H(X_0,X_1,Y_1),\label{eqn:cross1}\\
%\gamma_1^{(0)}-\beta_1^{(0)}+\gamma_1^{(1)}-\beta_1^{(1)}+\gamma_2^{(2)}-\beta_2^{(2)}\leq H(X_0)+H(X_1)+H(Y_2)-H(X_0,X_1,Y_2),\label{eqn:cross2}\\
%\gamma_1^{(0)}-\beta_1^{(0)}+\gamma_1^{(2)}-\beta_1^{(2)}+\gamma_2^{(1)}-\beta_2^{(1)}\leq H(X_0)+H(X_2)+H(Y_1)-H(X_0,X_2,Y_1),\label{eqn:cross3}\\
%\gamma_1^{(0)}-\beta_1^{(0)}+\gamma_1^{(2)}-\beta_1^{(2)}+\gamma_2^{(2)}-\beta_2^{(2)}\leq H(X_0)+H(X_2)+H(Y_2)-H(X_0,X_2,Y_2);\label{eqn:cross4}
%\end{gather}
\item There exist non-negative coding rates 
\begin{align}
&(\beta_1^{(0)},\beta_1^{(1)},\beta_1^{(2)},\beta_2^{(1)},\beta_2^{(2)},\gamma_1^{(0)},\gamma_1^{(1)},\gamma_1^{(2)},\gamma_2^{(1)},\gamma_2^{(2)})\notag\\
&\quad \in \mathcal{R}^*_{MD}\left(((V_1,V_2),X_0,X_1,X_2,Y_1,Y_2),\left(\{X_0,X_1,Y_1\},\{X_0,X_1,Y_2\},\{X_0,X_2,Y_1\},\{X_0,X_2,Y_2\}\right)\right);\label{eqn:10R}
\end{align}
\item There exist non-negative storage rates $(\alpha_1^{(0)},\alpha_1^{(1)},\alpha_1^{(2)},\alpha_2^{(1)},\alpha_2^{(2)})$ such that
\begin{gather}
\alpha_1^{(0)}\leq \beta_1^{(0)},\alpha_1^{(1)}\leq \beta_1^{(1)},\alpha_1^{(2)}\leq \beta_1^{(2)},\alpha_2^{(1)}\leq \beta_2^{(1)},\alpha_2^{(2)}\leq \beta_2^{(2)},\label{eqn:storage1}
\end{gather}
and if 
\begin{align}
\gamma_1^{(0)}-\beta_1^{(0)}+\gamma_1^{(1)}-\beta_1^{(1)}+\gamma_1^{(2)}-\beta_1^{(2)}< H(X_1)+H(X_2)+H(X_3)-H(X_0,X_1,X_2),
\end{align}
choose 
\begin{align}
(\alpha_1^{(0)},\alpha_1^{(1)},\alpha_1^{(2)},\gamma_1^{(0)},\gamma_1^{(1)},\gamma_1^{(2)})\in \mathcal{R}^*_{MD}\left(((V_1,V_2),X_0,X_1,X_2),\left(\{X_0,X_1,X_2\}\right)\right);
\end{align}
otherwise, choose $(\alpha_1^{(0)},\alpha_1^{(1)},\alpha_1^{(2)})=(\beta_1^{(0)},\beta_1^{(1)},\beta_1^{(2)})$. Similarly, if 
\begin{align}
\gamma_2^{(1)}-\beta_2^{(1)}+\gamma_2^{(2)}-\beta_2^{(2)}< I(Y_1;Y_2),
\end{align}
choose 
\begin{align}
(\alpha_2^{(1)},\alpha_2^{(2)},\gamma_2^{(1)},\gamma_2^{(2)})\in \mathcal{R}^*_{MD}\left(((V_1,V_2),Y_1,Y_2),\left(\{Y_1,Y_2\}\right)\right),
\end{align}
otherwise $(\alpha_2^{(1)},\alpha_2^{(2)})=(\beta_1^{(1)},\beta_1^{(2)})$;

%\begin{gather}
% \left\{
%\begin{matrix}
%(\alpha_1^{(0)},\alpha_1^{(1)},\alpha_1^{(2)})=(\beta_1^{(0)},\beta_1^{(1)},\beta_1^{(2)})& \text{ if } \substack{\gamma_1^{(0)}-\beta_1^{(0)}+\gamma_1^{(1)}-\beta_1^{(1)}+\gamma_1^{(2)}-\beta_1^{(2)}\\\,\,\geq H(X_1)+H(X_2)+H(X_3)-H(X_0,X_1,X_2)}\\
%\substack{\gamma_1^{(0)}-\alpha_1^{(0)}+\gamma_1^{(1)}-\alpha_1^{(1)}+\gamma_1^{(2)}-\alpha_1^{(2)}\\
%\,\,\,\leq H(X_0)+H(X_1)+H(X_2)-H(X_0,X_1,X_2)} & \text{ otherwise }
%\end{matrix}
%\right.;
%\end{gather}
%\begin{gather}
% \left\{
%\begin{matrix}
%(\alpha_2^{(1)},\alpha_2^{(2)})=(\beta_1^{(1)},\beta_1^{(2)})& \text{ if }\gamma_2^{(1)}-\beta_2^{(1)}+\gamma_2^{(2)}-\beta_2^{(2)}\geq I(Y_1;Y_2)\\
%\gamma_2^{(1)}-\alpha_2^{(1)}+\gamma_2^{(2)}-\alpha_2^{(2)}\leq I(Y_1;Y_2)& \text{ otherwise }
%\end{matrix}
%\right..\label{eqn:storage3}
%\end{gather}

\item The normalized average retrieval and storage rates 
\begin{align}
&2t\bar{\alpha}\geq \alpha_1^{(0)}+\alpha_1^{(1)}+\alpha_1^{(2)}+\alpha_2^{(1)}+\alpha_2^{(2)}\label{eqn:normalizedalphageneral},\\
%I(X_0,X_1,X_2;V_1,V_2)+\hat{\beta}^{(1)}_{2}+\hat{\beta}^{(2)}_{2},\label{eqn:normalizedalpha}\\
%\\\nonumber\\
%&\quad+\max_{q=1,2}H(Y_1|X_q,X_0)+ \max_{q=1,2}H(Y_2|X_q,X_0)\label{eqn:normalizedalpha}\\%\nonumber\\
%&\,+\nonumber\\
%&\,+\max[I(W_1,W_2;Y_2|X_1,X_0),I(W_1,W_2;Y_2|X_2,X_0)]\\
&4t\bar{\beta}\geq 2\beta_1^{(0)}+ \beta_1^{(1)}+\beta_1^{(2)}+\beta_2^{(1)}+\beta_2^{(2)}.\label{eqn:normalizedbetageneral}
\end{align}
\end{enumerate}

Then we have the following theorem.
\begin{thm}
\label{theorem:innerboundgeneral}
$\mathcal{R}^{(t)}_{in}\subseteq \mathcal{R}$. 
\end{thm}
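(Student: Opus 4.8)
The plan is a random-coding argument that assembles the MD$^*$ codebooks recalled in the preliminaries into a PIR code, followed by an expurgation step to meet the exact zero-error requirement built into $\mathcal{R}$. Fix a pair $(\bar\alpha,\bar\beta)\in\mathcal{R}^{(t)}_{in}$ together with everything witnessing membership: the conditional law $P_{X_0,X_1,X_2,Y_1,Y_2\mid V_1,V_2}$, the decoders $f_{1,1},f_{1,2},f_{2,1},f_{2,2}$ of \eqref{eqn:generaldecoding4}, and the rate vectors $(\beta^{(j)}_i),(\gamma^{(j)}_i),(\alpha^{(j)}_i)$ of conditions (2)--(3). Take $L=nt$ (pad with fewer than $t$ bits otherwise), so that $(W_1,W_2)=(V_1^n,V_2^n)$ is a length-$n$ i.i.d. source block over $\mathbb{F}_2^t\times\mathbb{F}_2^t$, which plays the role of $S$ in the MD machinery. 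Since the theorem is a set inclusion, it suffices to achieve this single point asymptotically.

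First I would build the codebooks. By \eqref{eqn:10R}, $(\beta^{(j)}_i,\gamma^{(j)}_i)$ lies in $\mathcal{R}^*_{MD}$ for the source $((V_1,V_2),X_0,X_1,X_2,Y_1,Y_2)$ with the four reconstruction sets $\{X_0,X_1,Y_1\}$, $\{X_0,X_1,Y_2\}$, $\{X_0,X_2,Y_1\}$, $\{X_0,X_2,Y_2\}$; generate the associated MD$^*$ codebooks, i.e.\ five i.i.d.\ codebooks at rates $\gamma^{(0)}_1,\gamma^{(1)}_1,\gamma^{(2)}_1,\gamma^{(1)}_2,\gamma^{(2)}_2$, each randomly binned down to $\beta^{(0)}_1,\beta^{(1)}_1,\beta^{(2)}_1,\beta^{(1)}_2,\beta^{(2)}_2$. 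When the first alternative of condition (3) is in force, re-use the three codebooks for $X_0,X_1,X_2$ and bin them a \emph{second} time, now as the MD$^*$ ensemble for $((V_1,V_2),X_0,X_1,X_2)$ with the single reconstruction set $\{X_0,X_1,X_2\}$, down to the storage rates $\alpha^{(0)}_1,\alpha^{(1)}_1,\alpha^{(2)}_1$. Condition (3) is exactly the statement that the binning already spent, $\sum_i(\gamma^{(i)}_1-\beta^{(i)}_1)$, is below the total budget $H(X_0)+H(X_1)+H(X_2)-H(X_0,X_1,X_2)$ for that joint reconstruction, so such an $\alpha$-binning is admissible; otherwise no compression below the $\beta$-level is possible and we store the $\beta$-level indices directly. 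The symmetric construction with $Y_1,Y_2$ and budget $I(Y_1;Y_2)$ is used at database-2.

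The code then runs as follows, with each claim read off from the preliminaries. Encoding: given $(V_1^n,V_2^n)$ pick a jointly typical five-tuple $(x_0^n,x_1^n,x_2^n,y_1^n,y_2^n)$ --- one exists with probability tending to $1$ because the $\gamma$'s lie in the unbinned MD region \eqref{eqn:rates} --- and let $S_1$, $S_2$ be the storage-level bin indices of $(x_0^n,x_1^n,x_2^n)$ and $(y_1^n,y_2^n)$. Queries are drawn from \eqref{eqn:Q1}--\eqref{eqn:Q2}, the four outcomes of $(Q_1,Q_2)$ being in bijection with the four reconstruction sets. Answers: a queried database first decodes its codewords from the stored indices --- possible with high probability by the secondary MD$^*$ guarantee when the first alternative was used, and trivially otherwise --- and then outputs the $\beta$-level bin indices of the requested descriptions (on $(Q_1,Q_2)=(1,1)$, say, database-1 outputs the $\beta$-bins of $x_0^n,x_1^n$ and database-2 the $\beta$-bin of $y_1^n$). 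Decoding: the user holds the $\beta$-bins of the three descriptions of the active set, recovers those three codewords uniquely (w.h.p.) from the MD$^*$ constraints \eqref{eqn:rates2} for that set, and applies the corresponding $f_{i,j}$ coordinatewise; a union bound over the finitely many error events gives $P_e\to0$. Privacy holds by construction: $S_n$ depends only on $(W_1,W_2)$, the marginal $Q^{[k]}_n$ is uniform on $\{1,2\}$ for both $k$, and given its value the answer is the \emph{same} deterministic function of $S_n$ regardless of which message is wanted, so $(Q^{[k]}_n,A^{[k]}_n,S_n)$ is distributed as $(Q,\varphi^{(Q)}_n(S_n),S_n)$ with $Q$ uniform and independent of $S_n$, an expression free of $k$. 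Finally, counting bits --- database-1 stores $n(\alpha^{(0)}_1+\alpha^{(1)}_1+\alpha^{(2)}_1)$, database-2 stores $n(\alpha^{(1)}_2+\alpha^{(2)}_2)$, and the expected answer lengths are $n(\beta^{(0)}_1+\tfrac12\beta^{(1)}_1+\tfrac12\beta^{(2)}_1)$ and $n\tfrac12(\beta^{(1)}_2+\beta^{(2)}_2)$ --- and dividing by $L=nt$ reproduces \eqref{eqn:normalizedalphageneral}--\eqref{eqn:normalizedbetageneral} up to $o(1)$, so for large $L$ we obtain an $(L,\alpha_1,\alpha_2,\beta_1,\beta_2)$ $\epsilon$-error code meeting the normalized targets. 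To land in $\mathcal{R}$ itself I would fix a good codebook realization and expurgate the vanishing fraction of message pairs on which any step fails to be exact, re-indexing the survivors with negligible rate loss.

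The step I expect to carry the weight is the coordination of the \emph{two} layers of binning together with the case split in condition (3): the answer-level ($\beta$) binning must keep all four triples $\{X_0,X_1,Y_j\},\{X_0,X_2,Y_j\}$ simultaneously decodable under \eqref{eqn:rates2}, while the storage-level ($\alpha$) binning must still let a database reconstruct, at request time, the answer-level indices of whichever \emph{pair} of its three descriptions is asked for --- and one must show that the threshold in condition (3) is precisely the boundary between the regime in which this extra storage compression is feasible and the regime in which it is not, all while one family of codebooks plays every role. A smaller but necessary point is verifying that allowing a database to perform this decode-then-rebin computation at answer time does not disturb privacy; it does not, since the composite map is still a fixed function of $S_n$ alone, but this must be argued explicitly because the answer functions $\varphi^{(q)}_n$ are no longer simple projections of the stored symbols.
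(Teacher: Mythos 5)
Your proof follows essentially the same route as the paper's: it builds an $\epsilon$-error code from the MD$^*$ random codebooks using the query law \eqref{eqn:Q1}--\eqref{eqn:Q2}, has each server decode its own stored descriptions and then re-bin at the $\beta$-level to form the answers, and finally expurgates bad message pairs to obtain a zero-error code. Your expurgation step is stated loosely; the paper makes it precise by observing that query--message independence forces any failing pair $(w_1,w_2)$ to have conditional error probability at least $1/4$, so that $\epsilon\le 1/8$ leaves at least $2^{2L-1}$ error-free pairs, enough to carry $(L-1)$-bit messages with negligible rate loss.
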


 We can in fact potentially enlarge the achievable region by taking $\cup_{t=1}^\infty\mathcal{R}^{(t)}_{in}$. However, unless $\mathcal{R}^{(t+1)}_{in}\subseteq \mathcal{R}^{(t)}_{in}$ for all $t\geq 1$, the region $\cup_{t=1}^\infty\mathcal{R}^{(t)}_{in}$ is even more difficult to characterize.  Nevertheless, for each fixed $t$, we can identify inner bounds by specifying a  feasible set of random variables $X_0,X_1,X_2,Y_1,Y_2$.

Instead of directly establishing this theorem, we shall prove the following theorem which establishes the existence of a PIR code with diminishing error probability, and then use an expurgation technique to extract a zero-error PIR code. %An alternative version of the PIR code needs to be defined first. 

\begin{thm}
\label{prop:epsilonerror}
Consider any $(\bar{\alpha},\bar{\beta})\in \mathcal{R}^{(t)}_{in}$. For any $\epsilon>0$ and sufficiently large $L$, there exists an $(L,L(\bar{\alpha}+\epsilon),L(\bar{\alpha}+\epsilon),L(\bar{\beta}+\epsilon),L(\bar{\beta}+\epsilon))$ $\epsilon$-error PIR code with the query distribution given in (\ref{eqn:Q1}) and (\ref{eqn:Q2}).
\end{thm}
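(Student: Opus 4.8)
The plan is to realize the abstract region $\mathcal{R}^{(t)}_{in}$ by an explicit random code built from the MD$^*$ codebooks appearing in conditions (2) and (3), and to track the error probability through the jointly-typical encoding/decoding steps. Treat the message pair $(W_1,W_2)=(V_1^{t\ell},V_2^{t\ell})$ as $\ell$ i.i.d.\ draws of the source $(V_1,V_2)$ on $\mathbb{F}_2^t\times\mathbb{F}_2^t$, so that $L=t\ell$ up to the harmless normalization; everything below is for large $\ell$. First I would fix a joint distribution $((V_1,V_2),X_0,X_1,X_2,Y_1,Y_2)$ and rates as in the definition of $\mathcal{R}^{(t)}_{in}$, and invoke the MD$^*$ construction of Section~2.3 twice: once for the ``query-time'' descriptions, generating codebooks for $X_0,X_1,X_2,Y_1,Y_2$ at rates $\gamma_1^{(0)},\gamma_1^{(1)},\gamma_1^{(2)},\gamma_2^{(1)},\gamma_2^{(2)}$ with bins of rates $\beta_1^{(0)},\ldots,\beta_2^{(2)}$ chosen so that the four reconstruction sets $\{X_0,X_1,Y_1\},\{X_0,X_1,Y_2\},\{X_0,X_2,Y_1\},\{X_0,X_2,Y_2\}$ are each decodable from their bin indices (this is exactly what membership in $\mathcal{R}^*_{MD}$ of (\ref{eqn:10R}) buys us); and once for the ``stored'' descriptions of $X_0,X_1,X_2$ (resp.\ $Y_1,Y_2$) at the reduced rates $\alpha_1^{(j)}$ (resp.\ $\alpha_2^{(j)}$), using condition (3) and the nested-binning / two-layer structure, so that the stored content is a further-binned version of the query-time descriptions whenever the strict inequalities in (3) hold, and is identical to it otherwise.

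Next I would specify the code in the language of Section~2.1. Database~$1$ stores the $X$-side stored descriptions (bin indices at rates $\alpha_1^{(0)},\alpha_1^{(1)},\alpha_1^{(2)}$), and database~$2$ stores the $Y$-side stored descriptions (at rates $\alpha_2^{(1)},\alpha_2^{(2)}$); this gives $\alpha_1\le \ell(\alpha_1^{(0)}+\alpha_1^{(1)}+\alpha_1^{(2)})$ and $\alpha_2\le \ell(\alpha_2^{(1)}+\alpha_2^{(2)})$, hence $\tfrac12(\alpha_1+\alpha_2)\le L(\bar\alpha+\epsilon)$ by (\ref{eqn:normalizedalphageneral}). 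The query distribution is the one in (\ref{eqn:Q1})--(\ref{eqn:Q2}): a query $q\in\{1,2\}$ at database~$1$ asks it to first recover the query-time $X$-codewords from its stored bins (possible, since the stored layer refines the query-time layer or the strict inequality in (3) governs recoverability) and then release the query-time bin indices for $\{X_0,X_q\}$; symmetrically database~$2$ releases the query-time bin index for $\{Y_q\}$ (with the index map twisted so that $Q_1=Q_2$ fetches $\{X_0,X_1,Y_1\}\to V_1$, $Q_1\neq Q_2$ fetches the appropriate $\{X_0,X_{q_1},Y_{q_2}\}\to V_2$, matching Fig.~\ref{fig:skewedsymmetric}). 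The answer lengths are $\beta_1^{(0)}+\beta_1^{(q)}$ from database~$1$ and $\beta_2^{(q)}$ from database~$2$; averaging over the query distribution and over the two requested messages gives $\tfrac12(\beta_1+\beta_2)\le \tfrac14(2\beta_1^{(0)}+\beta_1^{(1)}+\beta_1^{(2)}+\beta_2^{(1)}+\beta_2^{(2)})\cdot\ell/(t\ell)\cdot\ldots$, i.e.\ exactly (\ref{eqn:normalizedbetageneral}), so $\tfrac12(\beta_1+\beta_2)\le L(\bar\beta+\epsilon)$. Privacy is immediate: each database sees query $q$ uniform on $\{1,2\}$ independent of the requested index, and its answer is a deterministic function of $(q,S_n)$, so $P_{Q_n^{[1]},A_n^{[1]},S_n}=P_{Q_n^{[2]},A_n^{[2]},S_n}$ as required.

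For correctness with vanishing error I would union-bound over the bad events: (i) the MD$^*$ covering step fails, i.e.\ there is no tuple of codewords in the five query-time codebooks jointly typical with $(V_1^{t\ell},V_2^{t\ell})$ — this has probability $o(1)$ precisely because the chosen rate vector lies in $\mathcal{R}^*_{MD}$ of (\ref{eqn:10R}); (ii) the stored-layer covering step fails for the $X$-side or $Y$-side — controlled by condition (3) and the corresponding $\mathcal{R}^*_{MD}$ memberships; (iii) given a query, the decoder at database~$n$ fails to recover the intended query-time codewords from the stored bins, or the user fails to recover the intended query-time codeword triple from the released bin indices — these are the standard binning/Slepian–Wolf decoding-error events, whose exponents are positive exactly under the per-subset bin-rate inequalities (\ref{eqn:rates2}) that define $\mathcal{R}^*_{MD}$; (iv) the recovered codewords are jointly typical but the deterministic map $f_{i,j}$ of (\ref{eqn:generaldecoding4}) does not return $V$ exactly — here I would use that $V_1,V_2$ are \emph{functions} of the auxiliaries, so on the jointly typical set the per-letter reconstruction agrees with $V$ except on a vanishing fraction of coordinates; to get \emph{zero} residual symbol error within the $\epsilon$-error code I would either append a negligible-rate Slepian–Wolf ``cleanup'' layer or simply absorb the mismatch into $P_e\le\epsilon$, since Theorem~\ref{prop:epsilonerror} only claims an $\epsilon$-error code (the zero-error upgrade is deferred to the expurgation argument for Theorem~\ref{theorem:innerboundgeneral}). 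Collecting (i)--(iv), $P_e\to 0$, and rescaling $\epsilon$ finishes the proof.

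The main obstacle I anticipate is the bookkeeping in step (iii) together with the two-layer (stored vs.\ query-time) nesting in condition (3): one must check that the \emph{same} physical codebooks underlie both layers, that the bin-within-bin structure is consistent across the four reconstruction sets simultaneously, and that database~$n$'s decoding of its own stored content to regenerate the query-time descriptions does not leak information that would break privacy — i.e.\ that the released answer is still a function of $(q,S_n)$ alone and has the claimed length regardless of which codewords were stored. Making the joint typicality arguments for all four sets $\{X_0,X_1,Y_1\},\ldots,\{X_0,X_2,Y_2\}$ hold on a single high-probability event (rather than four separate ones) is the delicate point, but it follows from the usual observation that these are all sub-events of the single event ``the covering tuple $(X_0^{t\ell},X_1^{t\ell},X_2^{t\ell},Y_1^{t\ell},Y_2^{t\ell})$ is jointly typical with $(V_1^{t\ell},V_2^{t\ell})$'', which step (i) already secures.
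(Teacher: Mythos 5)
Your proposal follows essentially the same construction as the paper's own proof: generate MD$^*$ codebooks from the joint distribution $((V_1,V_2),X_0,X_1,X_2,Y_1,Y_2)$, bin the codewords twice on the same physical codebooks (storage bins at rates $\alpha_1^{(j)},\alpha_2^{(j)}$, retrieval bins at rates $\beta_1^{(j)},\beta_2^{(j)}$), place the $X$-side bins at server~1 and the $Y$-side bins at server~2, and use the query distribution of (\ref{eqn:Q1})--(\ref{eqn:Q2}). The paper's proof is a four-line sketch asserting ``overwhelming probability'' at each decoding step; you fill in the error-event bookkeeping and the privacy check, which the paper omits. This is the same route, more carefully written.

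Two small corrections. First, the parenthetical ``(possible, since the stored layer refines the query-time layer or the strict inequality in (3) governs recoverability)'' gets the refinement direction backwards: since $\alpha_i^{(j)}\leq\beta_i^{(j)}$ the storage bins are \emph{coarser}, not finer, so in the case $\alpha<\beta$ the server cannot map its stored bin index deterministically to the retrieval bin index. The actual mechanism is decode-then-rebin: the MD$^*$ membership in condition~(3), with reconstruction set $\{X_0,X_1,X_2\}$ (resp.\ $\{Y_1,Y_2\}$), guarantees the server can recover the codewords themselves from the coarse storage bins, after which it computes the finer retrieval bin indices directly from the recovered codewords; when $\alpha=\beta$ the storage and retrieval bins coincide and no decoding is needed. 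Second, your event~(iv) is not a real failure mode and your proposed fixes are the wrong ones: under the strong typicality underlying the MD covering step, any per-letter configuration with zero probability has zero empirical frequency, so the deterministic constraints $V_1=f_{1,1}(X_0,X_1,Y_1)$ etc.\ hold \emph{at every coordinate} of a jointly typical tuple, giving exact recovery — no cleanup layer is needed, and ``absorbing symbol mismatches into $P_e\le\epsilon$'' would not work anyway since $P_e$ in (\ref{eqn:Pe}) is a block-error probability, where a single symbol error already gives $\hat{W}_k\neq W_k$. Neither point breaks your argument; both are matters of getting the mechanism stated precisely.
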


The key observation to establish this theorem is that there are five descriptions in this setting, however, the retrieval and storage place different constraints on different combination of descriptions, and some descriptions can in fact be stored, recompressed, and then retrieved. Such compression and recompression may lead to storage savings. The description based on $X_0$ can be viewed as some common information to $X_1$ and $X_2$, which allows us to tradeoff the storage and retrieval rates.

%The proof of this theorem is given in Section \ref{sec:scheme}, which realizes on coding ideas from multiple description source coding and the binning technique. 

\begin{proof}[Proof of Theorem \ref{prop:epsilonerror}]

\vspace{0.1cm}
\noindent\textit{Codebook generation:} Codebooks are built using the MD codebooks based on the distribution $((V_1,V_2),X_0,X_1,X_2,Y_1,Y_2)$.  

\vspace{0.1cm}
\noindent\textit{Storage codes: } The bin indices of the codebooks are stored in the two servers: those of $X_0$, $X_1$, and $X_2$ are stored at server-1 at rates $\alpha_1^{(0)}$, $\alpha_1^{(1)}$, and $\alpha_1^{(2)}$, respectively; those of $Y_1$ and $Y_2$ are stored at server-2 at rates $\alpha_2^{(1)}$ and $\alpha_2^{(2)}$. Note that at such rates, the codewords for $X_0$, $X_1$, and $X_2$ can be recovered  jointly with overwhelming probability, while those for $Y_1$ and $Y_2$ can also be recovered jointly with overwhelming probability.

\vspace{0.1cm}
\noindent\textit{Retrieval codes: } A different set of bin indices of the codebooks are retrieved during the retrieval process, again based on the MD$^*$ codebooks: those of $X_0$, $X_1$, and $X_2$ are retrieved at server-1 at rates $\beta_1^{(0)}$, $\beta_1^{(1)}$, and $\beta_1^{(2)}$, respectively; those of $Y_1$ and $Y_2$ are retrieved at server-2 at rates $\beta_2^{(1)}$ and $\beta_2^{(2)}$. Note that at such rates, the codewords of $X_0$, $X_1$, and $Y_1$ can be jointly recovered such that using the three corresponding codewords, the required $V_1$ source vector can be recovered with overwhelming probability. Similarly, the three retrieval patterns of $(X_0,X_1,Y_2)\rightarrow V_2$, $(X_0,X_2,Y_1)\rightarrow V_2$, and $(X_0,X_2,Y_2)\rightarrow V_2$ will succeed with overwhelming probabilities. 

\vspace{0.1cm} 
\noindent\textit{Storage and retrieval rates: } The rates can be computed straightforwardly, after normalization by the parameter $t$.
\end{proof}

Next we use it to prove Theorem \ref{theorem:innerboundgeneral}. 
\begin{proof}[Proof of Theorem \ref{theorem:innerboundgeneral}]
Given an $\epsilon>0$, according to Proposition \ref{prop:epsilonerror}, we can find an $(L,L(\bar{\alpha}+\epsilon),L(\bar{\alpha}+\epsilon),L(\bar{\beta}+\epsilon),L(\bar{\beta}+\epsilon))$ $\epsilon$-error PIR code for some sufficient large $L$. The probability of error of this code can be rewritten as
\begin{align*}
P_e&=0.5\sum_{k=1,2}\sum_{(w_1,w_2)} 2^{-2L} P_{Q^{[k]}_1,Q^{[k]}_2|(w_1,w_2)}(w_k\neq \hat{W}_k).
\end{align*}
For a fixed $(w_1,w_2)$ pair, denote the event that there exists a $(q_1,q_2)\in \{(11),(22)\}$, {\em i.e.,} when $(Q^{[1]}_1,Q^{[1]}_2)=(q_1,q_2)$, such that $\hat{w}_1\neq w_1$ as $E^{(1)}_{w_1,w_2}$, and there exists a $(q_1,q_2)\in \{(12),(21)\}$ such that $\hat{w}_2\neq w_2$ as $E^{(2)}_{w_1,w_2}$. Since $(Q^{[k]}_1,Q^{[k]}_2)$ is independent of $(W_1,W_2)$, if $P(E^{(k)}_{w_1,w_2})\neq 0$, we must have $P(E^{(k)}_{w_1,w_2})\geq 0.5$. It follows that
\begin{align}
P_e&\geq 0.25\sum_{(w_1,w_2)} 2^{-2L}  \mathbf{1}(E^{[1]}_{w_1,w_2}\cup E^{[2]}_{w_1,w_2}),
\end{align}
where $\mathbb(\cdot)$ is the indicator function. This implies that for any $\epsilon\leq 0.125$, there are at most $2^{2L-1}$ pairs of $(w_1,w_2)$ that will induce any coding error. We can use any $2^{2L-2}$ of the remaining $2^{2L-1}$ pairs of $L$-bit sequence pairs to instead store a pair of $(L-1)$-bit messages, through an arbitrary but fixed one-to-one mapping. This new code has a factor of $1+1/(L-1)$ increase in the normalized coding rates, which is negligible when $L$ is large. Thus a zero-error PIR code has been found with the same normalized rates as the $\epsilon$-error code asymptotically, and this completes the proof.
\end{proof}

\subsection{Outer bounds}
We next turn our attention to the outer bounds for $\mathcal{R}$, summarized in the following theorem.
\begin{thm}
\label{theorem:outerbounds}
Any $(\bar{\alpha},\bar{\beta})\in \mathcal{R}$ must satisfy
\begin{align}
\bar{\beta}\geq 0.75,\quad \bar{\alpha}+\bar{\beta}\geq 2,\quad 3\bar{\alpha}+8\bar{\beta}\geq 10.\label{eqn:ITouter}
\end{align}
Moreover, if $(\bar{\alpha},\bar{\beta})\in \mathcal{R}$ can be achieved by a linear code, it must satisfy
\begin{align}
\bar{\alpha}+6\bar{\beta}\geq 6. \label{bound:linear}
\end{align}
\end{thm}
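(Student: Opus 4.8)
The plan is to prove the four inequalities one at a time from a small shared toolbox. Conditioning on queries throughout, I would use: (i) zero-error decodability, so $H(W_k\mid A_1^{[k]},A_2^{[k]},Q^{[k]})=0$; (ii) independence of $(W_1,W_2)$, hence of $(S_1,S_2)$, from the queries; (iii) the size bounds $H(S_n)\le\alpha_n$ and $H(A_n^{[k]}\mid Q_n^{[k]})\le\beta_n$, the latter because $A_n^{[k]}$ is a $\beta_n^{(q)}$-bit string when $Q_n^{[k]}=q$ and the query marginal is request-independent by privacy; (iv) privacy, which keeps the answer maps $\varphi_n^{(q)}$ and the per-database query marginals request-independent, so $Q_n^{[1]}$ and $Q_n^{[2]}$ may be coupled to be equal; (v) the Markov chain $A_1^{[k]}-S_1-S_2-A_2^{[k]}$, valid given $Q^{[k]}$ because a fixed query makes each answer a deterministic function of one $S_n$; and (vi) $H(S_1,S_2)\ge H(W_1,W_2)=2L$, which already gives $\bar\alpha\ge1$. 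The first bound $\bar\beta\ge0.75$ is just the download converse underlying the PIR capacity $C=(1-1/N)/(1-1/N^K)=2/3$ for $(N,K)=(2,2)$ of \cite{sun2017PIRcapacity}; its proof uses only (i), (ii) and (iv) and no storage constraint, so it transfers verbatim.

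For $\bar\alpha+\bar\beta\ge2$, I would couple $Q_2^{[1]}$ and $Q_2^{[2]}$ to a common value $Q_2$ using (iv); then $A_2^{[1]}=A_2^{[2]}=:A_2=\varphi_2^{(Q_2)}(S_2)$ is literally the same answer for both requests. Now $W_1$ is a deterministic function of $(A_1^{[1]},A_2,Q_1^{[1]},Q_2)$ with $A_1^{[1]}$ a function of $(S_1,Q_1^{[1]})$, and $W_2$ is a function of $(A_1^{[2]},A_2,Q_1^{[2]},Q_2)$ with $A_1^{[2]}$ a function of $(S_1,Q_1^{[2]})$, so $(W_1,W_2)$ is a deterministic function of $(S_1,A_2)$ together with the (message-independent) query randomness. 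Hence $2L=H(W_1,W_2)\le H(S_1)+H(A_2\mid Q_2)\le\alpha_1+\beta_2$, and symmetrically $\alpha_2+\beta_1\ge2L$; adding and normalizing gives $\bar\alpha+\bar\beta\ge2$. This is tight at the ``retrieve everything from one server'' corner $(\bar\alpha,\bar\beta)=(1,1)$.

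The bound $3\bar\alpha+8\bar\beta\ge10$ is the hard one: the generic inequalities above do not imply it (their best linear combination only gives $3\bar\alpha+8\bar\beta\ge9.75$), so a sharper argument is needed. I would refine the coupling idea: with $A_2$ the common DB-$2$ answer, $(S_1,A_2)$ must produce both messages, so $H(A_2\mid S_1)\ge 2L-H(S_1)$; but $A_2$ must also combine with the much smaller single DB-$1$ answers $A_1^{[1]}$ and $A_1^{[2]}$ to produce $W_1$ and $W_2$ separately, and the DB-$1$ side is symmetrically constrained. Tracking these competing demands through the reconstruction structure of Fig.~\ref{fig:skewedsymmetric} (the four recoveries of $W_1$ and $W_2$), combining the resulting entropy inequalities with $H(S_1,S_2)\ge2L$ by submodularity, and choosing multipliers so that the storage terms collect total weight $3$ and the download terms total weight $8$, should produce $3(\alpha_1+\alpha_2)+8(\beta_1+\beta_2)\ge20L$, with privacy (iv) used to collapse the per-$q$ answer terms onto $\beta_n$. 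Pinning down the exact multipliers and the exact submodular steps — necessarily a somewhat lengthy entropy computation — is the main obstacle.

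For the linear bound $\bar\alpha+6\bar\beta\ge6$, I would pass to linear algebra over $\mathbb{F}_2$: with $W=(W_1^\top,W_2^\top)^\top$, $S_n=G_nW$, $A_n^{(q)}=T_n^{(q)}G_nW$, we have $\alpha_n\ge\mathrm{rank}(G_n)$, $\beta_n^{(q)}\ge\mathrm{rank}(T_n^{(q)}G_n)$; correctness says the sum of the row spaces of $T_1^{(q_1)}G_1$ and $T_2^{(q_2)}G_2$ contains the appropriate message-coordinate block, and privacy pins a request-independent configuration of each answer's row space relative to the two blocks. The linearity-specific point — which fails for the non-linear scheme of Section~\ref{sec:SW} and is precisely why non-linear codes do better — is that a low-dimensional subspace cannot contain a large part of the $W_1$-block and a large part of the $W_2$-block at the same time, because the two blocks are complementary and $W_1\perp W_2$; hence each of the (at least two) answers a database produces, or its stored row space, is forced to be large. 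Summing the four correctness rank conditions against the two storage and two download budgets, with the privacy configuration supplying the coefficient, yields $\mathrm{rank}(G_1)+\mathrm{rank}(G_2)+6(\beta_1+\beta_2)\ge6L$, i.e. $\bar\alpha+6\bar\beta\ge6$; the real work is the subspace bookkeeping, not any single inequality.
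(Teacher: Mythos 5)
Your toolbox and the handling of the first two bounds are fine; in fact the paper itself does not re-prove any of them, since it cites \cite{sun2017PIRcapacity} for $\bar\beta\ge 0.75$ and \cite{tian2020storage} for both $\bar\alpha+\bar\beta\ge 2$ and $3\bar\alpha+8\bar\beta\ge 10$, and then only proves the linear bound. Your direct coupling argument for $\bar\alpha+\bar\beta\ge 2$ (couple $Q_2^{[1]}=Q_2^{[2]}$, observe $(S_1,A_2)$ determines both messages, so $2L\le\alpha_1+\beta_2$ and symmetrically) is a clean self-contained proof; the paper does not need one because it references prior work, but your version is correct.

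The two bounds you do not actually prove are where the content is. For $3\bar\alpha+8\bar\beta\ge 10$, your observation that cut-set and single-answer arguments max out at $3\bar\alpha+8\bar\beta\ge 9.75$ is a useful sanity check, but you stop at ``choosing multipliers\ldots should produce'' the bound — this is an acknowledged gap, not a proof. The inequality is nontrivial and in the paper it is imported from \cite{tian2020storage}; there is no short entropy calculation that your outline would directly yield.

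For the linear bound $\bar\alpha+6\bar\beta\ge 6$, your approach diverges from the paper's and also leaves the crux unfinished. The paper symmetrizes the scheme (so $H(A_1^{[1]}|\mathbb Q)=H(A_2^{[1]}|\mathbb Q)=H(A_2^{[2]}|\mathbb Q)$, $H(S_1)=H(S_2)$, and $Q_1^{[1]}=Q_1^{[2]}\Rightarrow A_1^{[1]}=A_1^{[2]}$), and then invokes \emph{Ingleton's inequality}
\begin{align*}
I(A_2^{[1]};A_2^{[2]}\mid\mathbb Q)\le I(A_2^{[1]};A_2^{[2]}\mid W_1,\mathbb Q)+I(A_2^{[1]};A_2^{[2]}\mid W_2,\mathbb Q)
\end{align*}
(with the cross term $I(W_1;W_2)=0$ since the messages are independent), which holds because a linear code induces a representable matroid on the involved random variables. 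This is exactly the precise, non-Shannon ingredient that your sketch gestures at with ``a low-dimensional subspace cannot contain a large part of the $W_1$-block and a large part of the $W_2$-block at the same time,'' but you never extract a single quantitative lemma from this intuition, and the ``sum the four rank conditions with the right coefficients'' step is unsubstantiated. The rest of the paper's derivation is a short chain of Shannon inequalities (submodularity, correctness, privacy) that sandwiches $I(A_2^{[1]};A_2^{[2]}\mid\mathbb Q)$ between $2H(A_1^{[1]}|\mathbb Q)-(\bar\alpha+\epsilon)L$ from below and $2(4H(A_1^{[1]}|\mathbb Q)-3L)$ from above, which rearranges into $\bar\alpha+6\bar\beta\ge 6$. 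If you want to complete your route, the cleanest way to make the subspace bookkeeping rigorous is precisely to package it as Ingleton's inequality applied to $(A_2^{[1]},A_2^{[2]},W_1,W_2)$ conditioned on $\mathbb Q$, after which your approach and the paper's coincide.
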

The inequality $\bar{\beta}\geq 0.75$ follows from \cite{sun2017PIRcapacity}, while the two other bounds in (\ref{eqn:ITouter}) were proved in \cite{tian2020storage}. 
Therefore we only need to prove (\ref{bound:linear}). 

% The bound for linear codes is proved in Section \ref{sec:outerbounds}. 

%\section{Proof of Proposition \ref{prop:epsilonerror}}
%\label{sec:scheme}

%\section{Proof of Theorem \ref{theorem:outerbounds}}
\begin{proof}[Proof of Theorem \ref{theorem:outerbounds}]
\label{sec:outerbounds}

Following  \cite{sun2018multiround}, we make the following simplifying assumptions that have no loss of generality. %(see Section 5 in \cite{Sun_Jafar_MPIR}). 
Define $\mathbb{Q} = \{Q_1^{[1]}, Q_1^{[2]}, Q_2^{[1]}, Q_2^{[2]}\}$.
\begin{eqnarray}
&& 1. ~~Q_1^{[1]} = Q_1^{[2]} ~\Rightarrow~ A_1^{[1]} = A_1^{[2]} \label{eq:1same},\\
&& 2. ~~H(A_1^{[1]} | \mathbb{Q}) = H(A_2^{[1]} | \mathbb{Q}) = H(A_2^{[2]} | \mathbb{Q}),\quad H(S_1) = H(S_2) \label{eq:sym0}\\
&& ~~~\Rightarrow ~H(A_1^{[1]} | \mathbb{Q}) \leq \beta \leq (\bar{\beta} + \epsilon)L , \quad H(S_2) \leq \alpha \leq (\bar{\alpha} + \epsilon)L. \label{eq:sym} 
%= \bar{\beta}
\end{eqnarray}
Assumption 1 states that the query to the first database is the same regardless of the desired message index. This is justified by the privacy condition that the query to one database is independent of the desired message index. Assumption 2 states that the scheme is symmetric after the symmetrization operation in Lemma 1 (the proof is referred to Theorem 3 in \cite{sun2018multiround}). (\ref{eq:sym}) follows from the fact that to describe $S_2, A_1^{[1]}$, the number of bits needed can not be less than the entropy value, and Definition 1.

In the following, we use $(c)$ to refer to the correctness condition, $(	i)$ to refer to the constraint that queries are independent of the messages, $(a)$ to refer to the constraint that answers are deterministic functions of the storage variables and corresponding queries, and $(p)$ to refer to the privacy condition.

%\subsection{$\bar{\alpha} + \bar{\beta} \geq 2$}
%From $A_1^{[1]}, A_2^{[1]}, A_2^{[2]}, \mathbb{Q}$, we can decode $W_1, W_2$.
%\begin{eqnarray}
%2L &=& H(W_1, W_2) \\
%&\overset{(c) (\ref{eq:1same})}{=}& I(A_1^{[1]}, A_2^{[1]}, A_2^{[2]}, \mathbb{Q}; W_1, W_2) 
%%+ o(L)
%\\
%&\overset{(i)}{=}& I(A_1^{[1]}, A_2^{[1]}, A_2^{[2]}; W_1, W_2 | \mathbb{Q}) %+ o(L) 
%\\
%&\leq& H(A_1^{[1]}|\mathbb{Q}) + H(A_2^{[1]}, A_2^{[2]}|\mathbb{Q}) %+ o(L) 
%\label{eq:f0} \\
%&\leq& H(A_1^{[1]}|\mathbb{Q}) + H(A_2^{[1]}, A_2^{[2]}, S_2|\mathbb{Q}) %+ o(L)
%\\
%&\overset{(a)}{=}& H(A_1^{[1]}|\mathbb{Q}) + H(S_2|\mathbb{Q}) %+ o(L) 
%\\
%%&\leq& H(A_1^{[1]}|\mathbb{Q}) + H(S_2) + o(L) \\
%&\overset{(\ref{eq:sym})}{\leq}& (\bar{\beta} + \bar{\alpha} + 2\epsilon)L %+ o(L) 
%\label{eq:f1}
%\end{eqnarray}
%Normalizing (\ref{eq:f1}) by $L$ and letting $L \rightarrow \infty$ (so that $\epsilon \rightarrow 0$) produce the desired bound.
%
%\subsection{$\bar{\alpha} + 6\bar{\beta} \geq 6$}
From $A_1^{[1]}, A_2^{[1]}, \mathbb{Q}$, we can decode $W_1$.
\begin{eqnarray}
H(A_1^{[1]}, A_2^{[1]} | W_1,\mathbb{Q}) &=& H(A_1^{[1]}, A_2^{[1]}, W_1 | \mathbb{Q}) - H(W_1 | \mathbb{Q}) \\
&\overset{(c)(i)}{=}& H(A_1^{[1]}, A_2^{[1]} | \mathbb{Q}) - L %+ o(L)
\\
&\overset{(\ref{eq:sym0})}{\leq}& 2 H(A_1^{[1]} | \mathbb{Q}) - L. %+ o(L) 
\label{eq:int}
\end{eqnarray}

Next, consider Ingleton's inequality.
\begin{eqnarray}
 I(A_2^{[1]} ; A_2^{[2]} | \mathbb{Q}) &\leq& I(A_2^{[1]} ; A_2^{[2]} | W_1, \mathbb{Q}) + I(A_2^{[1]} ; A_2^{[2]} | W_2, \mathbb{Q}) \\
&=& 2 I(A_2^{[1]} ; A_2^{[2]} | W_1, \mathbb{Q}) \label{eq:switch} \\
&=& 2 \big(H(A_2^{[1]} | W_1, \mathbb{Q}) + H(A_2^{[2]}|W_1, \mathbb{Q}) - H(A_2^{[1]}, A_2^{[2]} | W_1,\mathbb{Q}) \big) \\
&\overset{(p)}{=}& 2 \big(2H(A_2^{[1]} | W_1, \mathbb{Q}) - H(A_2^{[1]}, A_2^{[2]} | W_1,\mathbb{Q}) \big) \label{eq:p} \\
&\leq& 2 \big(2H(A_2^{[1]} | W_1, \mathbb{Q}) + H(A_1^{[1]}, A_2^{[1]} | W_1,\mathbb{Q})  \notag\\
&&-~ H(A_1^{[1]}, A_2^{[1]}, A_2^{[2]} | W_1, \mathbb{Q}) - H(A_2^{[1]} | W_1, \mathbb{Q}) \big) \label{eq:sub} \\
&\overset{(c)(\ref{eq:1same})}{=}& 2 \big( H(A_2^{[1]} | W_1, \mathbb{Q}) + H(A_1^{[1]}, A_2^{[1]} | W_1,\mathbb{Q})  \notag\\
&&-~ H(A_1^{[1]}, A_2^{[1]}, A_2^{[2]}, W_2 | W_1, \mathbb{Q}) %+ o(L) 
\big) \\
&\overset{(i)}{\leq}& 2 \big( 2 H(A_1^{[1]}, A_2^{[1]} | W_1,\mathbb{Q}) - H(W_2) %+ o(L) 
\big) \\
&\overset{(\ref{eq:int})}{\leq}& 2 \big(2 (2H(A_1^{[1]} | \mathbb{Q}) - L) - L %+ o(L)
\big) \label{eq:i1}
\end{eqnarray}
where (\ref{eq:switch}) follows from the observation that the second term can be bounded using the same method as that bounds the first term by switching the message index. A more detailed derivation of (\ref{eq:p}) appears in (79) of \cite{sun2018multiround}. (\ref{eq:sub}) is due to sub-modularity of entropy.

Note that 
\begin{eqnarray}
I(A_2^{[1]} ; A_2^{[2]} | \mathbb{Q}) &=& H(A_2^{[1]} | \mathbb{Q}) + H(A_2^{[2]} | \mathbb{Q}) - H(A_2^{[1]}, A_2^{[2]} | \mathbb{Q}) \\
&\overset{(\ref{eq:sym0})}{\geq}& 2H(A_1^{[1]} | \mathbb{Q})  - (\bar{\alpha} + \epsilon)L  \label{eq:i2}
\end{eqnarray}
where in (\ref{eq:i2}), and the second term is bounded as follows :
\begin{eqnarray}
 H(A_2^{[1]}, A_2^{[2]}|\mathbb{Q}) \leq  H(A_2^{[1]}, A_2^{[2]}, S_2|\mathbb{Q}) \overset{(a)}{=} H(S_2|\mathbb{Q}) 
\overset{(\ref{eq:sym})}{\leq} (\bar{\alpha} + \epsilon)L.
\label{eq:f1}
\end{eqnarray}

Combining (\ref{eq:i1}) and (\ref{eq:i2}), we have
\begin{eqnarray}
&& 2H(A_1^{[1]} | \mathbb{Q})/L - (\bar{\alpha}+\epsilon) \geq 2(4H(A_1^{[1]} | \mathbb{Q})/L - 3) \notag\\
&\Rightarrow& \bar{\alpha}+\epsilon + 6 H(A_1^{[1]} | \mathbb{Q})/L \geq 6 \\
&\overset{(\ref{eq:sym})}{\Rightarrow}& \bar{\alpha} + 6 \bar{\beta} \geq 6.
\end{eqnarray}
The proof is complete.
\end{proof}

\subsection{Specialization of the Inner Bound}

The inner bound given in Theorem \ref{theorem:innerboundgeneral} is general but more involved, and we can specialize it in multiple ways in order to simplify it. One particularly interesting approach is as follows. Define the region $\tilde{\mathcal{R}}^{(t)}_{in}$ to be the collection of $(\bar{\alpha},\bar{\beta})$ pairs such that there exists random variables $(X_0,X_1,X_2,Y_1,Y_2)$ jointly distributed with $(V_1,V_2)$ such that 
\begin{enumerate}
\item The distribution factorizes as follows $$P_{V_1,V_2,X_0,X_1,X_2,Y_1,Y_2}=P_{V_1,V_2}P_{X_0|V_1,V_2}P_{X_1|V_1,V_2}P_{X_2|V_1,V_2}P_{Y_1|V_1,V_2}P_{Y_2|V_1,V_2};$$
\item There exist deterministic functions $f_{1,1}$, $f_{1,2}$, $f_{2,1}$, and $f_{2,2}$ such that
\begin{gather}
V_1=f_{1,1}(X_0,X_1,Y_1)=f_{2,2}(X_0,X_2,Y_2)\label{eqn:decoding1},\\
V_2=f_{1,2}(X_0,X_1,Y_2)=f_{2,1}(X_0,X_2,Y_1)\label{eqn:decoding4};
\end{gather}
\item A set of rates 
\begin{gather}
\gamma_1^{(0)}=I(V_1,V_2;X_0),\,\gamma_1^{(1)}=I(V_1,V_2;X_1),\,\gamma_1^{(2)}=I(V_1,V_2;X_2),\label{eqn:gammas}\\
\gamma_2^{(1)}=I(V_1,V_2;Y_1),\,\gamma_2^{(2)}=I(V_1,V_2;Y_2),\\
\beta_1^{(0)}=\gamma_1^{(0)},\,\beta_1^{(1)}=I(V_1,V_2;X_1|X_0),\,\beta_1^{(2)}=I(V_1,V_2;X_2|X_0),\\
\beta_2^{(1)}=\max(I(V_1,V_2;Y_1|X_0,X_1),I(V_1,V_2;Y_1|X_0,X_2)),\\
\beta_2^{(2)}=\max(I(V_1,V_2;Y_2|X_0,X_1),I(V_1,V_2;Y_2|X_0,X_2)),\label{eqn:betas}
\end{gather}
and $
(\alpha_1^{(0)}=\gamma_1^{(0)},\alpha_1^{(1)},\alpha_1^{(2)}, \alpha_2^{(1)},\alpha_2^{(2)})$ as defined in item 3 for the general region $\mathcal{R}^{(t)}$;
\item The normalized average retrieval and storage rates 
\begin{align}
&2t\bar{\alpha}\geq \alpha_1^{(0)}+\alpha_1^{(1)}+\alpha_1^{(2)}+\alpha_2^{(1)}+\alpha_2^{(2)},\label{eqn:normalizedalpha}\\
%I(X_0,X_1,X_2;V_1,V_2)+\hat{\beta}^{(1)}_{2}+\hat{\beta}^{(2)}_{2},\label{eqn:normalizedalpha}\\
%\\\nonumber\\
%&\quad+\max_{q=1,2}H(Y_1|X_q,X_0)+ \max_{q=1,2}H(Y_2|X_q,X_0)\label{eqn:normalizedalpha}\\%\nonumber\\
%&\,+\nonumber\\
%&\,+\max[I(W_1,W_2;Y_2|X_1,X_0),I(W_1,W_2;Y_2|X_2,X_0)]\\
&4t\bar{\beta}\geq 2\beta_1^{(0)}+ \beta_1^{(1)}+\beta_1^{(2)}+\beta_2^{(1)}+\beta_2^{(2)}.\label{eqn:normalizedbeta}
\end{align}
\end{enumerate}
Then we have the following corollary.
\begin{corollary}
\label{theorem:innerbound}
$\tilde{\mathcal{R}}^{(t)}_{in}\subseteq \mathcal{R}$. 
\end{corollary}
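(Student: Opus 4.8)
\textbf{Proof proposal for Corollary \ref{theorem:innerbound}.}

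The plan is to exhibit $\tilde{\mathcal{R}}^{(t)}_{in}$ as a concrete subset of $\mathcal{R}^{(t)}_{in}$, so that the claim follows immediately from Theorem \ref{theorem:innerboundgeneral}. The key observation is that the specialized region is obtained from the general one by (a) restricting to a product conditional distribution $P_{X_0|V_1,V_2}P_{X_1|V_1,V_2}P_{X_2|V_1,V_2}P_{Y_1|V_1,V_2}P_{Y_2|V_1,V_2}$, and (b) pinning down the $\gamma$'s, $\beta$'s, and $\alpha$'s to a particular feasible point of the relevant MD$^*$ rate regions rather than quantifying over all of them. So the work is entirely in verifying that this particular choice of rates actually lies in the rate regions that appear in conditions 2 and 3 of the definition of $\mathcal{R}^{(t)}_{in}$.

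First I would verify condition 2 of $\mathcal{R}^{(t)}_{in}$, i.e.\ that the tuple
$(\beta_1^{(0)},\beta_1^{(1)},\beta_1^{(2)},\beta_2^{(1)},\beta_2^{(2)},\gamma_1^{(0)},\gamma_1^{(1)},\gamma_1^{(2)},\gamma_2^{(1)},\gamma_2^{(2)})$
defined in (\ref{eqn:gammas})--(\ref{eqn:betas}) lies in
$\mathcal{R}^*_{MD}$ for the source $(V_1,V_2)$ with descriptions $X_0,X_1,X_2,Y_1,Y_2$ and reconstruction sets $\{X_0,X_1,Y_1\},\{X_0,X_1,Y_2\},\{X_0,X_2,Y_1\},\{X_0,X_2,Y_2\}$. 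Here the product structure from condition 1 is what makes the MD rate inequalities (\ref{eqn:rates}) collapse: because the $U_i$'s are conditionally independent given $S=(V_1,V_2)$, one has $\sum_{i\in\mathcal{A}}H(U_i)-H(\{U_i,i\in\mathcal{A}\}\mid S)=\sum_{i\in\mathcal{A}}I(S;U_i)$ only when $\mathcal{A}$ is a singleton, and more generally the joint-typicality rates $\gamma$ are exactly the mutual informations $I(V_1,V_2;\cdot)$ chosen in (\ref{eqn:gammas}); one then checks that the binning reductions $\gamma-\beta$ in (\ref{eqn:rates2}) for each reconstruction set $\mathcal{A}_i$ are met with the $\beta$'s given — e.g.\ for $\mathcal{A}=\{X_0,X_1,Y_1\}$, $\beta_1^{(1)}=I(V_1,V_2;X_1|X_0)$ handles binning $X_1$ against $X_0$, and the max in $\beta_2^{(1)}$ ensures the bound holds for both $\mathcal{A}_i$ containing $Y_1$ simultaneously. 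This conditional-independence bookkeeping, matching each $\mathcal{J}\subseteq\mathcal{A}_i$ inequality against the chosen rates for all four reconstruction sets at once, is the step I expect to be the main obstacle — it is not deep, but it requires care to confirm that a single rate assignment satisfies all the constraints jointly.

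Next I would check condition 3: the storage-side inequalities $\alpha_1^{(0)}\le\beta_1^{(0)}$, etc., and the two conditional MD$^*$ memberships for recompressing $(X_0,X_1,X_2)$ and $(Y_1,Y_2)$. But condition 3 is defined by reference verbatim — the corollary's item 3 says $(\alpha_1^{(0)}=\gamma_1^{(0)},\alpha_1^{(1)},\alpha_1^{(2)},\alpha_2^{(1)},\alpha_2^{(2)})$ is ``as defined in item 3 for the general region,'' so there is nothing to prove here beyond noting consistency (in particular $\alpha_1^{(0)}=\beta_1^{(0)}=\gamma_1^{(0)}$ is a valid choice since $I(V_1,V_2;X_0)=I(V_1,V_2;X_0)$). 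Finally, conditions 1 and 4 of $\tilde{\mathcal{R}}^{(t)}_{in}$ are literally special cases / restatements of the corresponding parts of $\mathcal{R}^{(t)}_{in}$ (the decoding functions (\ref{eqn:decoding1})--(\ref{eqn:decoding4}) are exactly (\ref{eqn:generaldecoding4}), and the normalized-rate inequalities (\ref{eqn:normalizedalpha})--(\ref{eqn:normalizedbeta}) coincide with (\ref{eqn:normalizedalphageneral})--(\ref{eqn:normalizedbetageneral})). Hence every $(\bar\alpha,\bar\beta)\in\tilde{\mathcal{R}}^{(t)}_{in}$ is witnessed by the same random variables and a valid rate tuple in $\mathcal{R}^{(t)}_{in}$, giving $\tilde{\mathcal{R}}^{(t)}_{in}\subseteq\mathcal{R}^{(t)}_{in}\subseteq\mathcal{R}$, which completes the proof.
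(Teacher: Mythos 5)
Your overall strategy matches the paper's proof: exhibit $\tilde{\mathcal{R}}^{(t)}_{in}$ as a subset of $\mathcal{R}^{(t)}_{in}$ by verifying that the explicit rate assignment (\ref{eqn:gammas})--(\ref{eqn:betas}) lies in the relevant $\mathcal{R}^*_{MD}$ region, and then invoke Theorem~\ref{theorem:innerboundgeneral}. You have also correctly identified that conditions 1, 3, and 4 are either trivial or defined by reference to the general region.

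Two corrections. First, a technical slip: you claim that the identity $\sum_{i\in\mathcal{A}}H(U_i)-H(\{U_i,i\in\mathcal{A}\}\mid S)=\sum_{i\in\mathcal{A}}I(S;U_i)$ holds ``only when $\mathcal{A}$ is a singleton.'' This is backwards --- the whole point of the product factorization in condition 1 is that $H(\{U_i,i\in\mathcal{A}\}\mid S)=\sum_{i\in\mathcal{A}}H(U_i\mid S)$ for \emph{every} $\mathcal{A}$, so the identity holds for every subset, and it is exactly this fact that places the $\gamma$'s of (\ref{eqn:gammas}) in $\mathcal{R}_{MD}((V_1,V_2),X_0,X_1,X_2,Y_1,Y_2)$ with equality. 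Second, and more importantly, the step you describe as the expected ``main obstacle'' is in fact the entire substance of the paper's proof and cannot be left as a gesture: one reduces by symmetry to the single reconstruction set $\{X_0,X_1,Y_1\}$, computes $\gamma_1^{(0)}-\beta_1^{(0)}=0$, $\gamma_1^{(1)}-\beta_1^{(1)}=I(X_1;X_0)$, $\gamma_2^{(1)}-\beta_2^{(1)}\le I(Y_1;X_0,X_1)$, and then confirms each of the seven inequalities of the form (\ref{eqn:rates2}) (for $\mathcal{J}$ a singleton, a pair, or the full triple) explicitly against the right-hand sides $\sum_{j\in\mathcal{J}}H(U_j)-H(\{U_j\}_{j\in\mathcal{J}}\mid \{U_{j'}\}_{j'\in\mathcal{A}_i\setminus\mathcal{J}})$. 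Without writing out that chain of inequalities, your argument remains an outline rather than a proof.
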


This inner bound is illustrated together with the outer bounds in Fig. \ref{fig:bounds}.

\begin{figure}
\centering
\includegraphics[width=0.65\textwidth]{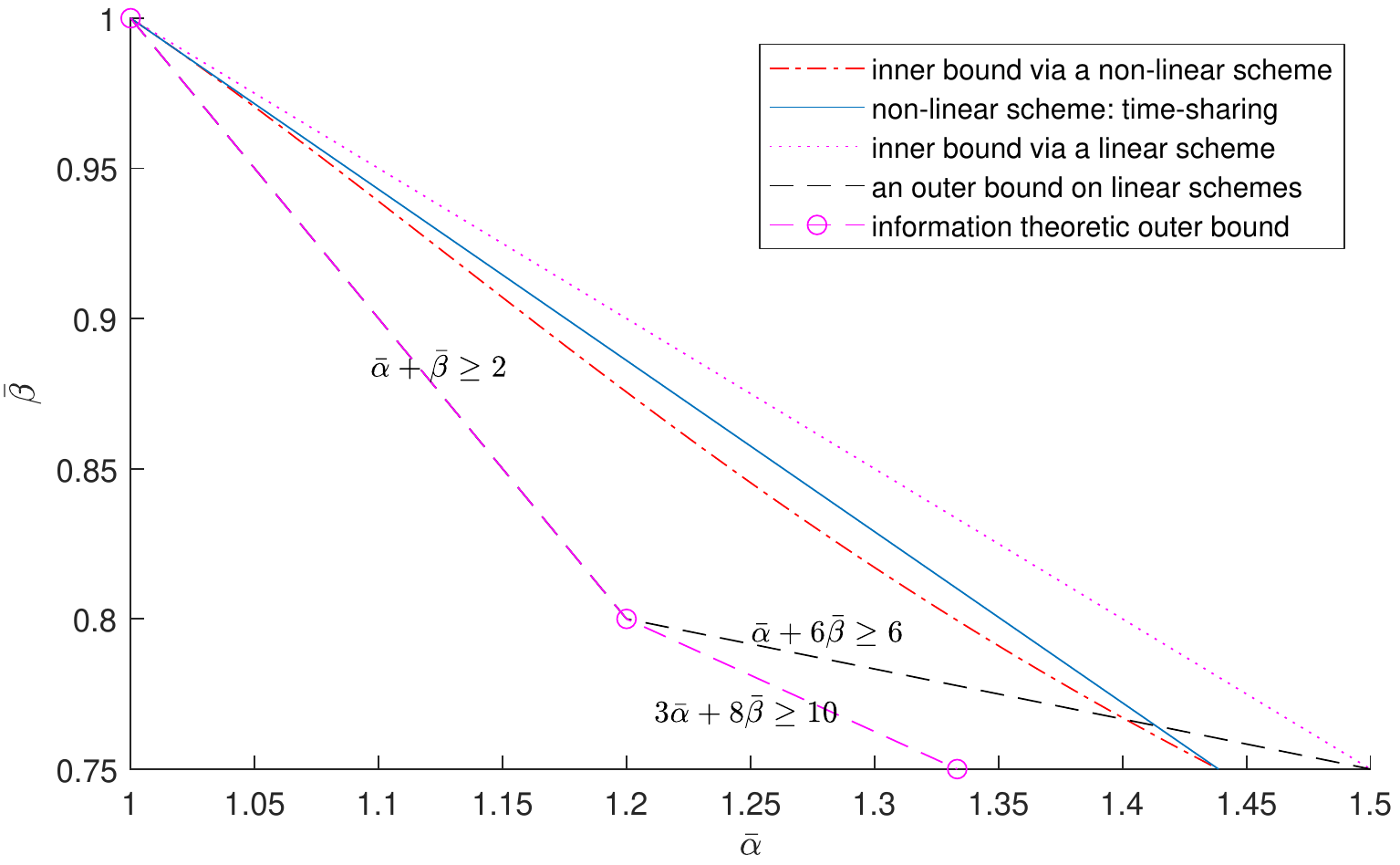}
\caption{Illustration of inner bounds and outer bounds.\label{fig:bounds}}
\end{figure}

\begin{proof}
The main difference from Theorem \ref{theorem:innerboundgeneral} is in the special dependence structure of $(X_0,X_1,X_2,Y_1,Y_2)$ jointly distributed with $(V_1,V_2)$, i.e., the Markov structure. 
We verify that the rate assignments satisfy all the constraints in Theorem \ref{theorem:innerboundgeneral}. 
%\begin{gather}
%\gamma_1^{(0)}=I(V_1,V_2;X_0),\,\gamma_1^{(1)}=I(V_1,V_2;X_1),\,\gamma_1^{(2)}=I(V_1,V_2;X_2)\\
%\gamma_2^{(1)}=I(V_1,V_2;Y_1),\,\gamma_2^{(2)}=I(V_1,V_2;Y_2)\\
%\beta_1^{(0)}=\gamma_1^{(0)},\,\beta_1^{(1)}=I(V_1,V_2;X_1|X_0),\,\beta_1^{(2)}=I(V_1,V_2;X_2|X_0)\\
%\beta_2^{(1)}=\max(I(V_1,V_2;Y_1|X_0,X_1),I(V_1,V_2;Y_1|X_0,X_2))\\
%\beta_2^{(2)}=\max(I(V_1,V_2;Y_2|X_0,X_1),I(V_1,V_2;Y_1|X_0,X_2))\\
%\alpha_1^{(0)}=\gamma_1^{(0)},\,\alpha_1^{(1)}=\beta_1^{(1)},\,\alpha_1^{(2)}=\beta_1^{(2)}\\
%\alpha_2^{(1)}=\beta_2^{(1)},\,\alpha_2^{(2)}=\beta_2^{(2)}.
%\end{gather}
Due to the special dependence structure of $(X_0,X_1,X_2,Y_1,Y_2)$ jointly distributed with $(V_1,V_2)$, it is straightforward to verify that $$(\gamma_1^{(0)},\gamma_1^{(1)},\gamma_1^{(2)},\gamma_2^{(1)},\gamma_2^{(2)})\in \mathcal{R}_{MD}((V_1,V_2),X_0,X_1,X_2,Y_1,Y_2).$$ 
We next verify (\ref{eqn:10R}) holds with the choice given above. Due to the symmetry in the structure, we only need to confirm one subset of random variables, i.e., $\{X_0,X_1,Y_1\}$, and the three other subsets $\{X_0,X_1,Y_2\}$,  $\{X_0,X_2,Y_1\}$, and  $\{X_0,X_2,Y_2\}$ follow similarly. There are a total of $7$ conditions in the form of (\ref{eqn:rates2}) associated with this subset $\{X_0,X_1,Y_1\}$. Notice that
\begin{align*}
\gamma_1^{(0)}-\beta_1^{(0)}=0,\,\gamma_1^{(1)}-\beta_1^{(1)}=I(X_1;X_0),\,\gamma_2^{(2)}-\beta_2^{(2)}\leq I(Y_1;X_0,X_1),
\end{align*}
which in fact confirm three of the seven conditions when $\mathcal{J}$ is a singleton. Next when $\mathcal{J}$ has two elements, we verify that
\begin{align}
\gamma_1^{(0)}-\beta_1^{(0)}+\gamma_1^{(1)}-\beta_1^{(1)}&=I(X_1;X_0)= H(X_0)+H(X_1)-H(X_0,X_1)\notag\\
&\leq H(X_0)+H(X_1)-H(X_0,X_1|Y_1),\\
\gamma_1^{(0)}-\beta_1^{(0)}+\gamma_2^{(1)}-\beta_2^{(1)}&\leq I(Y_1;X_0,X_1)=H(Y_1)+H(X_0,X_1)-H(X_0,X_1,Y_1)\notag\\
&\leq H(Y_1)+H(X_0)+H(X_1)-H(X_0,X_1,Y_1)\notag\\
&= H(X_0)+H(Y_1)-H(X_0,Y_1|X_1),\\
\gamma_1^{(1)}-\beta_1^{(1)}+\gamma_2^{(1)}-\beta_2^{(1)}&\leq I(X_1;X_0)+I(Y_1;X_0,X_1)=H(X_1)+H(Y_1)-H(X_1,Y_1|X_0).
\end{align}
Finally when $\mathcal{J}$ has all the three elements, we have
\begin{align}
&\gamma_1^{(0)}-\beta_1^{(0)}+\gamma_1^{(1)}-\beta_1^{(1)}+\gamma_2^{(1)}-\beta_2^{(1)}\notag\\
&=I(X_0;X_1)+I(V_1,V_2;X_1)-\max(I(V_1,V_2;Y_1|X_0,X_1),I(V_1,V_2;Y_1|X_0,X_2))\\
&\leq I(X_0;X_1)+I(V_1,V_2;X_1)-I(V_1,V_2;Y_1|X_0,X_1)\\
&=H(X_0)+H(X_1)+H(Y_1)-H(X_0,X_1,Y_1).
\end{align}
Thus (\ref{eqn:10R}) is indeed true with the assignments (\ref{eqn:gammas})-(\ref{eqn:betas}). This in fact completes the proof. 
\end{proof}

%It is easy to see that $\beta_1^{(1)}\leq \gamma_1^{(1)}$ and $\beta_1^{(2)}\leq \gamma_1^{(2)}$, due to the Markov string $X_1\leftrightarrow (V_1,V_2)\leftrightarrow X_0$ and $X_2\leftrightarrow (V_1,V_2)\leftrightarrow X_0$, respectively. 
%Since 
%\begin{align}
%I(V_1,V_2;Y_1|X_0,X_1)&=H(Y_1|X_0,X_1)-H(Y_1|V_1,V_2,X_0,X_1)=H(Y_1|X_0,X_1)-H(Y_1|V_1,V_2)\\
%&\leq H(Y_1)-H(Y_1|V_1,V_2)=I(Y_1;V_1,V_2),
%\end{align}
%and similarly
%\begin{align}
%I(V_1,V_2;Y_2|X_0,X_2)\leq I(Y_1;V_1,V_2),
%\end{align}
%we see that $\beta_2^{(1)}\leq \gamma_2^{(1)}$. Similarly, we have $\beta_2^{(2)}\leq \gamma_2^{(2)}$. Next, notice that
%\begin{align}
%&\gamma_1^{(0)}-\beta_1^{(0)}+\gamma_1^{(1)}-\beta_1^{(1)}+\gamma_2^{(1)}-\beta_2^{(1)}\notag\\
%&=I(X_0;X_1)+I(V_1,V_2;X_1)-\max(I(V_1,V_2;Y_1|X_0,X_1),I(V_1,V_2;Y_1|X_0,X_2))\\
%&\leq I(X_0;X_1)+I(V_1,V_2;X_1)-I(V_1,V_2;Y_1|X_0,X_1)\\
%&=H(X_0)+H(X_1)+H(Y_1)-H(X_0,X_1,Y_1),
%\end{align}
%which is condition (\ref{eqn:cross1}). The conditions (\ref{eqn:cross2})-(\ref{eqn:cross4}) can be verified similarly.

\begin{table}[t!]
\caption{Conditional distribution $P_{X_0|W_1,W_2}$ used in Corollary \ref{coro:testchannel}. \label{tab:distribution}}
%\begin{align}
\[\arraycolsep=1.5pt\def\arraystretch{1.0}
\begin{array}{|c|cccc|}\hline
           (w_1,w_2)         &    x_0=(00)  & x_0=(01)   & x_0=(10) & x_0=(11)   \\\hline
(00)&   1/2            &                  &                 & 1/2\\
(10)&   (1-p)/2      &     p           &                 & (1-p)/2\\
(01)&   (1-p)/2      &                  &    p           & (1-p)/2\\
(11)&   1/2            &                  &                 & 1/2\\\hline
\end{array}\nonumber\]
%\end{align}
\end{table}

We can use any explicit distribution $(X_0,X_1,X_2,Y_1,Y_2)$ to obtain an explicit inner bound to $\tilde{\mathcal{R}}^{(t)}_{in}$, and the next corollary provides one such non-trivial bound. For convenience, we write the entropy function of a probability mass   $(p_1,\ldots,p_t)$ as $H(p_1,\ldots,p_t)$.
\begin{corollary}
\label{coro:testchannel}
The following $(\bar{\alpha},\bar{\beta})\in \mathcal{R}$ for any $p\in [0,1]$:
\begin{align*}
\bar{\alpha}=&\frac{9}{4}-H(\frac{1}{4},\frac{3}{4})+\frac{1}{4}H(\frac{1-p}{2},\frac{1-p}{2},\frac{p}{2},\frac{p}{2})\\
&+\frac{1}{2}H(\frac{2-p}{4},\frac{2-p}{4},\frac{p}{2})-\frac{3}{4}H(\frac{3-2p}{6},\frac{3-2p}{6},\frac{p}{3},\frac{p}{3}),\\
\bar{\beta}=&\frac{5}{8}+\frac{1}{4}H(\frac{2-p}{4},\frac{2-p}{4},\frac{p}{2})-\frac{1}{8}H(\frac{1-p}{2},\frac{1-p}{2},p).
\end{align*}
\end{corollary}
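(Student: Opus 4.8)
The plan is to specialize Corollary \ref{theorem:innerbound} by choosing an explicit joint distribution of $(X_0,X_1,X_2,Y_1,Y_2)$ with $(V_1,V_2)$ for $t=2$, and then simply evaluate all of the entropy expressions appearing in items 3 and 4 of the definition of $\tilde{\mathcal{R}}^{(t)}_{in}$. First I would fix $t=2$, so that $(V_1,V_2)$ is uniform on $\mathbb{F}_2^2\times\mathbb{F}_2^2$; think of each of $V_1,V_2$ as a pair of bits. The variables $X_1,X_2,Y_1,Y_2$ will be the deterministic, non-linear AND/NOR-type functions of $(V_1,V_2)$ used in Section \ref{sec:SW}, applied coordinatewise to the two bit-positions (this is what makes $t=2$ rather than $t=1$ necessary — a single AND bit is not enough to separate the messages, but applying the construction to a length-$2$ block gives the needed decodability). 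The new ingredient is the common-information variable $X_0$, whose conditional law $P_{X_0\mid W_1,W_2}$ is the one tabulated in Table \ref{tab:distribution} (again applied per-coordinate, or on the appropriate block), with the parameter $p$ interpolating between "$X_0$ carries nothing" and "$X_0$ carries the full common part of $X_1,X_2$." One must check the Markov/factorization condition (item 1) — which holds by construction since $X_0,X_1,X_2,Y_1,Y_2$ are generated conditionally independently given $(V_1,V_2)$ — and the decodability condition (item 2), i.e. that $V_1$ is a function of $(X_0,X_1,Y_1)$ and of $(X_0,X_2,Y_2)$, and $V_2$ of $(X_0,X_1,Y_2)$ and of $(X_0,X_2,Y_1)$; this is inherited from the coding structure of Fig. \ref{fig:skewedsymmetric} verified in Section \ref{sec:SW}, noting that $X_0$ is extra side information that can only help.

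With the distribution fixed, the remaining work is bookkeeping: compute each of $\gamma_1^{(0)},\gamma_1^{(1)},\gamma_1^{(2)},\gamma_2^{(1)},\gamma_2^{(2)}$ and $\beta_1^{(0)},\beta_1^{(1)},\beta_1^{(2)},\beta_2^{(1)},\beta_2^{(2)}$ from \eqref{eqn:gammas}--\eqref{eqn:betas}, and then $(\alpha_1^{(0)},\alpha_1^{(1)},\alpha_1^{(2)},\alpha_2^{(1)},\alpha_2^{(2)})$ via item 3 of the general region (the conditional MD/binning step that recompresses the $X$-bin-indices and the $Y$-bin-indices). Since $X_1,X_2,Y_1,Y_2$ are deterministic functions of $(V_1,V_2)$, every mutual information $I(V_1,V_2;\cdot)$ collapses to an entropy $H(\cdot)$, and all the relevant entropies are entropies of the explicit probability vectors appearing in the statement: $H(\tfrac14,\tfrac34)$ for a single AND-coordinate, $H(\tfrac{1-p}{2},\tfrac{1-p}{2},\tfrac{p}{2},\tfrac{p}{2})$ for the joint law of $(X_1,X_0)$-type pairs, $H(\tfrac{2-p}{4},\tfrac{2-p}{4},\tfrac{p}{2})$ and $H(\tfrac{3-2p}{6},\tfrac{3-2p}{6},\tfrac{p}{3},\tfrac{p}{3})$ for the conditional combinations, and $H(\tfrac{1-p}{2},\tfrac{1-p}{2},p)$ for the $Y$ given $X_0$ term. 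Plugging these into \eqref{eqn:normalizedalpha} and \eqref{eqn:normalizedbeta} with $2t=4t=8$ appropriately (i.e. the $2t=4$ and $4t=8$ normalizations at $t=2$) and collecting terms yields exactly the two displayed formulas for $\bar\alpha$ and $\bar\beta$.

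The step I expect to be the real obstacle — or at least the one most error-prone — is the conditional-recompression step in item 3: deciding, for the chosen $p$, whether the strict inequality $\gamma_1^{(0)}-\beta_1^{(0)}+\gamma_1^{(1)}-\beta_1^{(1)}+\gamma_1^{(2)}-\beta_1^{(2)} < H(X_1)+H(X_2)+H(X_0)-H(X_0,X_1,X_2)$ (and its $Y$-counterpart $\gamma_2^{(1)}-\beta_2^{(1)}+\gamma_2^{(2)}-\beta_2^{(2)}<I(Y_1;Y_2)$) holds, and if so correctly solving the small conditional-MD rate region $\mathcal{R}^*_{MD}(((V_1,V_2),X_0,X_1,X_2),(\{X_0,X_1,X_2\}))$ for the minimizing $(\alpha_1^{(0)},\alpha_1^{(1)},\alpha_1^{(2)})$ that minimizes the storage sum. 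One must identify which vertex of that polytope is active — intuitively, one keeps $\alpha_1^{(0)}=\gamma_1^{(0)}$ (store the common description in full, since it is retrieved by all four patterns) and pushes the savings onto $\alpha_1^{(1)},\alpha_1^{(2)}$ — and similarly $\alpha_2^{(1)},\alpha_2^{(2)}$ get recompressed against each other because $Y_1,Y_2$ are never needed together. The asymmetry between the $X_0$ term (kept) and the $X_1,X_2,Y_1,Y_2$ terms (recompressed) is precisely what produces the $-\tfrac34 H(\tfrac{3-2p}{6},\dots)$ and $-\tfrac18 H(\tfrac{1-p}{2},\tfrac{1-p}{2},p)$ corrections in the final expressions; getting the coefficients $\tfrac14,\tfrac12,\tfrac34$ and $\tfrac14,\tfrac18$ right requires care with the per-coordinate versus per-block accounting at $t=2$, but it is otherwise a routine, finite computation.
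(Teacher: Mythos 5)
Your overall strategy — specializing Corollary~\ref{theorem:innerbound} with the AND/NOR variables of \eqref{eqn:distribution} and the test channel $P_{X_0\mid W_1,W_2}$ of Table~\ref{tab:distribution}, then doing an entropy computation — is the right one, and matches the paper. But two of your specific claims are wrong and point to a real gap in your understanding of the construction.

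First, you take $t=2$ and justify it by asserting that ``a single AND bit is not enough to separate the messages.'' That is false: at $t=1$, with $X_1=V_1\land V_2$ and $Y_1=V_1\land\neg V_2$, one has $V_1=X_1\lor Y_1$, and similarly $V_2 = X_1 \lor Y_2$, $V_1 = X_2 \lor Y_2$ (etc.\ after negation), so the decodability requirement \eqref{eqn:decoding1}--\eqref{eqn:decoding4} holds already on single bits; this is precisely the coding structure of Fig.~\ref{fig:skewedsymmetric} exploited in Section~\ref{sec:SW}. The paper's proof sets $t=1$. Your choice of $t=2$ would double the normalizations $2t$ and $4t$ in \eqref{eqn:normalizedalpha}--\eqref{eqn:normalizedbeta} to $4$ and $8$, but your per-coordinate entropies would also double, so you may land in the right place by accident — yet the stated reason for needing $t=2$ is simply incorrect, and it inflates the role of ``block of length 2'' accounting that has no counterpart in the actual argument.

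Second, and more substantively, you attribute the $-\tfrac{1}{8}H\bigl(\tfrac{1-p}{2},\tfrac{1-p}{2},p\bigr)$ term to a $Y$-side recompression step: ``$\alpha_2^{(1)},\alpha_2^{(2)}$ get recompressed against each other because $Y_1,Y_2$ are never needed together.'' This cannot be right. That term sits in the $\bar\beta$ expression, and $\bar\beta$ is determined entirely by the retrieval rates $\beta$ via \eqref{eqn:normalizedbeta} — the storage rates $\alpha_2^{(j)}$ play no role in it. It arises from evaluating $\beta_2^{(1)}$ and $\beta_2^{(2)}$ in \eqref{eqn:betas} as conditional mutual informations. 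Moreover, the paper explicitly notes that for this choice of $(X_0,X_1,X_2,Y_1,Y_2)$ one has $\alpha_2^{(1)}=\beta_2^{(1)}$ and $\alpha_2^{(2)}=\beta_2^{(2)}$: the $Y$-side recompression is \emph{not} active here (the precondition $\gamma_2^{(1)}-\beta_2^{(1)}+\gamma_2^{(2)}-\beta_2^{(2)}<I(Y_1;Y_2)$ is not invoked), so there is no $Y$-side storage saving at all. The only nontrivial recompression happens on the $X$ side and produces, together with keeping $\alpha_1^{(0)}=\gamma_1^{(0)}$, the $-\tfrac34 H\bigl(\tfrac{3-2p}{6},\tfrac{3-2p}{6},\tfrac{p}{3},\tfrac{p}{3}\bigr)$-type structure in $\bar\alpha$. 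You identified the step you called the ``real obstacle'' correctly, but then reported the wrong branch of the case split for the $Y$ variables, which is exactly the kind of error that check was meant to catch.
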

\begin{proof}
These tradeoff pairs are obtained by applying Corollary \ref{theorem:innerbound}, and choosing $t=1$ and setting $(X_1,X_2,Y_1,Y_2)$ as given in (\ref{eqn:distribution}), and letting $X_0$ be defined as in Table \ref{tab:distribution}. Note that the joint distribution indeed satisfies the required Markov structure, and in this case $\alpha_2^{(1)}=\beta_2^{(1)}$ and $\alpha_2^{(2)}=\beta_2^{(2)}$.
\end{proof}

\section{Conclusion}
We consider the problem of private information retrieval using a Shannon-theoretic approach. A new coding scheme based on random coding and binning is proposed, which reveals a hidden connection to the multiple description problem. It is shown that for the $(2,2)$ PIR setting, this non-linear coding scheme is able to provide the best known tradeoff between retrieval rate and storage rate, which is strictly better than that achievable using linear codes. We further investigate the relation between zero-error PIR codes and $\epsilon$-error PIR codes in this setting, and shows that they do not causes any essential difference in this problem setting. We hope that the hidden connection to multiple description coding can provide a new revenue to design more efficient PIR codes.

\bibliographystyle{IEEEtran}

\begin{thebibliography}{10}
\providecommand{\url}[1]{#1}
\csname url@samestyle\endcsname
\providecommand{\newblock}{\relax}
\providecommand{\bibinfo}[2]{#2}
\providecommand{\BIBentrySTDinterwordspacing}{\spaceskip=0pt\relax}
\providecommand{\BIBentryALTinterwordstretchfactor}{4}
\providecommand{\BIBentryALTinterwordspacing}{\spaceskip=\fontdimen2\font plus
\BIBentryALTinterwordstretchfactor\fontdimen3\font minus
  \fontdimen4\font\relax}
\providecommand{\BIBforeignlanguage}[2]{{%
\expandafter\ifx\csname l@#1\endcsname\relax
\typeout{** WARNING: IEEEtran.bst: No hyphenation pattern has been}%
\typeout{** loaded for the language `#1'. Using the pattern for}%
\typeout{** the default language instead.}%
\else
\language=\csname l@#1\endcsname
\fi
#2}}
\providecommand{\BIBdecl}{\relax}
\BIBdecl

\bibitem{chor1995private}
B.~Chor, O.~Goldreich, E.~Kushilevitz, and M.~Sudan, ``Private information
  retrieval,'' in \emph{Foundations of Computer Science, 1995. Proceedings.,
  36th Annual Symposium on}, Oct. 1995, pp. 41--50.

\bibitem{Shah_Rashmi_Kannan}
N.~Shah, K.~Rashmi, and K.~Ramchandran, ``One extra bit of download ensures
  perfectly private information retrieval,'' in \emph{Proceedings of 2014 IEEE
  International Symposium on Information Theory (ISIT)}, Jun.-Jul. 2014, pp.
  856--860.

\bibitem{Fazeli_Vardy_Yaakobi}
A.~Fazeli, A.~Vardy, and E.~Yaakobi, ``Codes for distributed {PIR} with low
  storage overhead,'' in \emph{2015 Proceedings of IEEE International Symposium
  on Information Theory (ISIT)}, Jun. 2015, pp. 2852--2856.

\bibitem{Rao_Vardy}
S.~Rao and A.~Vardy, ``Lower bound on the redundancy of {PIR} codes,''
  \emph{arXiv preprint arXiv:1605.01869}, 2016.

\bibitem{blackburn2019pir}
S.~R. Blackburn and T.~Etzion, ``Pir array codes with optimal virtual server
  rate,'' \emph{IEEE Transactions on Information Theory}, vol.~65, no.~10, pp.
  6136--6145, 2019.

\bibitem{blackburn2019pirschemes}
S.~R. Blackburn, T.~Etzion, and M.~B. Paterson, ``Pir schemes with small
  download complexity and low storage requirements,'' \emph{IEEE Transactions
  on Information Theory}, vol.~66, no.~1, pp. 557--571, 2019.

\bibitem{zhang2019private}
Y.~Zhang, X.~Wang, H.~Wei, and G.~Ge, ``On private information retrieval array
  codes,'' \emph{IEEE Transactions on Information Theory}, vol.~65, no.~9, pp.
  5565--5573, 2019.

\bibitem{vajha2017binary}
M.~Vajha, V.~Ramkumar, and P.~V. Kumar, ``Binary, shortened projective reed
  muller codes for coded private information retrieval,'' in \emph{2017 IEEE
  International Symposium on Information Theory (ISIT)}, 2017, pp. 2648--2652.

\bibitem{asi2018nearly}
H.~Asi and E.~Yaakobi, ``Nearly optimal constructions of pir and batch codes,''
  \emph{IEEE Transactions on Information Theory}, vol.~65, no.~2, pp. 947--964,
  2018.

\bibitem{Chan_Ho_Yamamoto}
T.~H. Chan, S.-W. Ho, and H.~Yamamoto, ``Private information retrieval for
  coded storage,'' in \emph{Proceedings of 2015 IEEE International Symposium on
  Information Theory (ISIT)}, Jun. 2015, pp. 2842--2846.

\bibitem{sun2017PIRcapacity}
H.~Sun and S.~A. Jafar, ``The capacity of private information retrieval,''
  \emph{IEEE Transactions on Information Theory}, vol.~63, no.~7, pp.
  4075--4088, Jul. 2017.

\bibitem{Tajeddine_Rouayheb}
R.~Tajeddine, O.~W. Gnilke, and S.~El~Rouayheb, ``Private information retrieval
  from {MDS} coded data in distributed storage systems,'' \emph{IEEE
  Transactions on Information Theory}, vol.~64, no.~11, pp. 7081 -- 7093, 2018.

\bibitem{banawan2018capacity}
K.~Banawan and S.~Ulukus, ``The capacity of private information retrieval from
  coded databases,'' \emph{IEEE Transactions on Information Theory}, vol.~64,
  no.~3, pp. 1945--1956, Mar. 2018.

\bibitem{tian2018capacity}
C.~Tian, H.~Sun, and J.~Chen, ``Capacity-achieving private information
  retrieval codes with optimal message size and upload cost,'' \emph{IEEE
  Transactions on Information Theory}, vol.~65, no.~11, pp. 7613--7627, Nov.
  2019.

\bibitem{zhou2020capacity}
R.~Zhou, C.~Tian, H.~Sun, and T.~Liu, ``Capacity-achieving private information
  retrieval codes from mds-coded databases with minimum message size,''
  \emph{IEEE Transactions on Information Theory}, vol.~66, no.~8, pp.
  4904--4916, 2020.

\bibitem{Sun_Jafar_TPIR}
H.~Sun and S.~A. Jafar, ``The capacity of robust private information retrieval
  with colluding databases,'' \emph{IEEE Transactions on Information Theory},
  vol.~64, no.~4, pp. 2361--2370, 2018.

\bibitem{ulukus2022private}
S.~Ulukus, S.~Avestimehr, M.~Gastpar, S.~Jafar, R.~Tandon, and C.~Tian,
  ``Private retrieval, computing and learning: Recent progress and future
  challenges,'' \emph{IEEE Journal on Selected Areas in Communications}, 2022.

\bibitem{attia2020capacity}
M.~A. Attia, D.~Kumar, and R.~Tandon, ``The capacity of private information
  retrieval from uncoded storage constrained databases,'' \emph{IEEE
  Transactions on Information Theory}, vol.~66, no.~11, pp. 6617--6634, 2020.

\bibitem{sun2018multiround}
H.~Sun and S.~A. Jafar, ``Multiround private information retrieval: Capacity
  and storage overhead,'' \emph{IEEE Transactions on Information Theory},
  vol.~64, no.~8, pp. 5743--5754, 2018.

\bibitem{sun2019breaking}
H.~Sun and C.~Tian, ``Breaking the {MDS-PIR} capacity barrier via joint storage
  coding,'' \emph{Information}, vol.~10, no.~9, p. 265, 2019.

\bibitem{guo2021new}
T.~Guo, R.~Zhou, and C.~Tian, ``New results on the storage-retrieval tradeoff
  in private information retrieval systems,'' \emph{IEEE Journal on Selected
  Areas in Information Theory}, vol.~2, no.~1, pp. 403--414, 2021.

\bibitem{tian2018shannon}
C.~Tian, H.~Sun, and J.~Chen, ``A shannon-theoretic approach to the
  storage-retrieval tradeoff in pir systems,'' in \emph{2018 IEEE International
  Symposium on Information Theory (ISIT)}, 2018, pp. 1904--1908.

\bibitem{gamal1982achievable}
A.~Gamal and T.~Cover, ``Achievable rates for multiple descriptions,''
  \emph{IEEE Transactions on Information Theory}, vol.~28, no.~6, pp. 851--857,
  1982.

\bibitem{tian2020storage}
C.~Tian, ``On the storage cost of private information retrieval,'' \emph{IEEE
  Transactions on Information Theory}, vol.~66, no.~12, pp. 7539--7549, 2020.

\bibitem{venkataramani2003multiple}
R.~Venkataramani, G.~Kramer, and V.~K. Goyal, ``Multiple description coding
  with many channels,'' \emph{IEEE Transactions on Information Theory},
  vol.~49, no.~9, pp. 2106--2114, 2003.

\bibitem{wyner1976rate}
A.~Wyner and J.~Ziv, ``The rate-distortion function for source coding with side
  information at the decoder,'' \emph{IEEE Transactions on information Theory},
  vol.~22, no.~1, pp. 1--10, 1976.

\bibitem{pradhan2004n}
S.~S. Pradhan, R.~Puri, and K.~Ramchandran, ``$n$-channel symmetric multiple
  descriptions-part i: $(n, k)$ source-channel erasure codes,'' \emph{IEEE
  Transactions on Information Theory}, vol.~50, no.~1, pp. 47--61, 2004.

\bibitem{tian2010new}
C.~Tian and J.~Chen, ``New coding schemes for the symmetric $ k $-description
  problem,'' \emph{IEEE Transactions on Information Theory}, vol.~56, no.~10,
  pp. 5344--5365, 2010.

\bibitem{sgarro1977source}
A.~Sgarro, ``Source coding with side information at several decoders,''
  \emph{IEEE Transactions on Information Theory}, vol.~23, no.~2, pp. 179--182,
  1977.

\end{thebibliography}
% Generated by IEEEtran.bst, version: 1.14 (2015/08/26)

\end{document}